
\documentclass{IEEEtran}

\usepackage{eulervm}

\usepackage{amsmath}
\usepackage{amssymb}
\usepackage{amsthm}
\usepackage{graphicx}
\usepackage{subfigure}
\usepackage{lettrine}
\usepackage{array}
\usepackage{caption}
\usepackage[below]{placeins}

\date{}
\title{A One-Hop Information Based Geographic Routing Protocol for Delay Tolerant MANETs}

\author{Lei~You$^{1}$,
        Jianbo~Li$^{2*}$, 
        Changjiang Wei$^{3}$ 
        Chenqu Dai$^{4}$\\
\small{Information Engineering College, Qingdao University, Qingdao 266071, Shandong Province, China} \\
\small\texttt{youleiqdu@gmail.com$^1$,lijianboqdu@gmail.com$^2$,wcj@qdu.edu.cn$^3$, daichenqu@gmail.com$^4$} \\
}

\begin{document}

\maketitle

\theoremstyle{plain}
\newtheorem{definition}{Definition}
\newtheorem{theorem}{Theorem}

\def\received{}

\begin{abstract}
Delay and Disruption Tolerant Networks (DTNs) may lack continuous network connectivity. Routing in DTNs is thus a challenge since it must handle network partitioning, long delays, and dynamic topology. Meanwhile, routing protocols of the traditional Mobile Ad hoc NETworks (MANETs) cannot work well due to the failure of its assumption that most network connections are available. In this article, a geographic routing protocol is proposed for MANETs in delay tolerant situations, by using no more than one-hop information. A utility function is designed for implementing the under-controlled replication strategy. To reduce the overheads caused by message flooding, we employ a criterion so as to evaluate the degree of message redundancy. Consequently a message redundancy coping mechanism is added to our routing protocol. Extensive simulations have been conducted and the results show that when node moving speed is relatively low, our routing protocol outperforms the other schemes such as Epidemic, Spray and Wait, FirstContact in delivery ratio and average hop count, while introducing an acceptable overhead ratio into the network. 
\end{abstract}

\noindent\textbf{Keywords: }Geographic Routing;\ Protocol;\ Delay Tolerant Networks;\ Mobile Ad-Hoc Networks;\ One-hop Information.

\section{Introduction}
\label{intro}

Delay
and Disruption Tolerant Networks (DTNs) has grown from relatively obscure research activities to a healthy research topic attracting both network designers and application developers\cite{Khabbaz2012}, due to that the communication model of the Internet is based on some inherent networking assumptions, e.g., the existence of a continuous end-to-end path between two nodes, the relatively short round-trip delays, the symmetric data rates and the low error rates\cite{Voyiatzis2012}.
However, in DTNs these assumptions usually fail, which leads the fact that the TCP/IP protocol does not work. 
Hence, many application protocols designed for the Internet architecture cannot operate well in the DTN scenarios such as Interplanetary Internet (IPN).
Besides, another major contributor to this trend is the observation that quite a few terrestrial networks exhibit delay-tolerant properties, albeit of different nature: from sparse mobile ad-hoc to sensor networks to mobile Internet access, it is found that delay tolerance also exists as an important element to describe communication behavior and to design protocols suitable for operation in the corresponding challenged networking environment\cite{Ott2012}.
Though there are many research achievements in Mobile Ad-hoc NETworks (MANETs), most of them assumed that the end-to-end
connection usually exists in the network. 

Besides, the research of communication has been extended into areas previously beyond the grasp of generic networking architectures in the past few years.
These networks have a variety of applications in situations that include crisis environments like emergency response and military battlefields\cite{Parikh2005}, vehicular communication\cite{Pereira2011}, mobile sensor networks\cite{Cha2013} and non-interactive Internet access in rural areas\cite{Demmer2007}. From the user’s perspective, not all the applications need a real-time response and thus such kind of requirements under the overlay networking should be relaxed with delay tolerance, so that the concept of MANETs could get much closer to reality.
And such kind of MANETs applications include e-mail service, 
news disseminating service and local temperature measure 
etc.
For these kind of application requests, there is an 
urgent need for designing a new networking architecture to 
converge all kinds of heterogeneous networks.
All these applications mentioned above address the importance of successful delivery instead of real-time response, and we call this characteristic delay-tolerance.
Kevin fall et.~al in \cite{Fall2003} present an early thinking about the subject and specifies the characteristics of networking in this kind of challenged networks.
The DTN architecture is defined in RFC 4838\cite{Cerf2007}.
and the Bundle Protocol (BP) is defined in RFC 5050\cite{Scott2007}, which runs between the application layer and the network layer. 

Most research achievements on DTNs focus on the design of routing protocols.
Though there are something common of routing between Internet and DTNs, routing in DTNs still faces many challenges.
Nodes in Mobile Delay Tolerant Networks (MDTNs) are more than fixed hosts and routers.
Furthermore, there are scarcely no pre-deployed infrastructures or assistant controlled nodes in delay tolerant MANETs, which means that there is no router-like device, and thus routing will be executed by all mobile nodes in the network cooperating with each other.
In other words, each node plays a role of router and hence acts in a “store-carry-forward” manner. 
Though that not relying on infrastructures or controlled nodes would highly increase the difficulty for routing, there are many reasons for paying more attention to the networks that own the ad hoc properties.
For example, for communication in military battlefields, a node statistically has a high possibility to be destroyed, which may lead to intermittent connections in the whole network.
Furthermore, the mobility pattern will change according to the tactical plan.
Thus nodes need to spontaneously form the network and then server as routers in order to deliver the message.
For vehicular to communicate, although we can pre-deploy some access points along the road or somewhere else at regular intervals, it is still considered to be much more expensive than assembling the wireless device on vehicle itself and relying on their mobility to communication.
And this has already been a hot research topic as a new branch of MANETs and named as Vehicular Ad hoc NETworks (VANETs).

Some proposed routing protocols rely on acquiring more than one-hop information or the assumption of the availability of global topology knowledge. 
Some others employ various kinds of message distribution protocols so as to obtain the needed information for routing. 
However, most of them are unrealistic to practically implement due to that partitioning, long delays, and dynamic topology may cause the failure of transmitting the needed routing information.
Taking a step back, although those information distribution mechanisms worked well, the collected information would be inaccurate or expired thus losing their real-time value for routing.

In this article, we intend to focus on investigating the mobile ad hoc networks with delay tolerance. Our work mainly differs from other achievements in the following aspects:
\begin{itemize}
\item The routing scheme is based on no more than one-hop information. There is no need to broadcast the link status to the whole network or to record any history informations for each node.
\item Most of the current research achievements devote to utilize a metric evaluating the relationship between the current node and the destination node (i.e. destination-aware metric). However, this implicitly requires that each node has the knowledge of all its prospective destination nodes and thus is of relatively low feasibility in DTNs. In this paper, we try to provide a kind of fresh thinking of the routing problem in DTNs, that let each node to ensure uniform geographical spread of the message copies.
\item It is necessary to cope with the overloads in the network due to that the multi-copy strategy usually introduces high message redundancy in the networks. As far as our information goes, this is the first paper to give the concept of message redundancy degree based on the message coverage in the network.
\end{itemize}
Our contributions are listed as following:
\begin{itemize}
\item
A distributed geographic routing scheme is proposed based on no more than one-hop information.
In details, a node only need its local one-hop neighbor(s)’ position information to make the next-hop choice.
The only assumption we made is that every node in the network is equipped with a local positioning device, thus having the ability to obtain its neighbors’ locations.
This kind of geographic routing is easy to implement in the real network, and would not trigger abundant calculations, so that the scarce energy resource could be saved.
\item
A simple criterion named as the Message Redundancy Degree (MRD) is proposed for measuring the message redundancy.
In this paper the accurate definition of 2-order MRD is given, and the concept of MRD can be extended to k-order $(k>2)$. 
We leave this to be our future work.
\item
A mechanism to cope with the message redundancy, by which we can reduce the size of useless message overlapping area during the whole routing process. 
We set a threshold value of 2-order MRD, for that when two nodes meet with a MRD value larger than the margin, the redundancy coping mechanism will be triggered for the purpose of saving the constrained buffer resource.
\item
The routing performance is evaluated and compared with other well-known routing schemes by extensive simulations. 
The simulation results show that when node moving speed is relatively low, our routing protocol outperforms the other schemes such as Epidemic, Spray and Wait, FirstContact in the aspect of delivery ratio and average hop count, while introducing an acceptable overhead ratio into the network.
\end{itemize}

The paper is organized as follows. 
In Section \ref{related} we report on previous works in the field of DTNs. 
In Section \ref{prelim} we introduce the preliminary and motivation of the routing strategy. 
In section \ref{keyprobs} we put forward key problems in routing design. 
Section \ref{routing} is devoted to mechanism routing design techniques. 
In section \ref{simu} we show the simulation results. 
Section \ref{con} eventually concludes this paper.

\section{Related Work}
\label{related}

Recent years have seen considerable research works proposed to address the issues of routing algorithms and protocols in DTNs. 
Most of them devote to deal with very sparse networks or intermittent connectivity is to reinforce connectivity on demand.
Some schemes bring additional communication infrastructures into the network, e.g., satellites, UAVs or controlled nodes, when necessary. 
The author in \cite{Zhao2004} propose to adopt Message Ferry (MF) under the sparse scenario where additional ferries are within the dedicated region to relay the message. 
As an extension based on MF, the work in \cite{Zhao2005a} focuses on using multiple ferries and designing their appropriate routes to maximize the throughput and minimize the delivery delay with four approaches proposed, which are SIngle Route Algorithm (SIRA), MUlti Route Algorithm (MURA), Node Relaying Algorithm (NRA) and Ferry Relaying Algorithm (FRA). 

Until currently, a set of congestion control mechanisms have been proposed in Deterministic DTNs, which is mainly implemented in the network with limited mobility or the static network with scheduled disruption interval. 
\cite{Cao2011} propose an active congestion control based routing algorithm that pushes the selected message before the congestion happens. 
\cite{Thompson2010} propose a novel node-based replication management algorithm which addresses buffer congestion by dynamically limiting the replication a node performs during each encounter. 
\cite{Dvir2010} use information about queue backlogs, random walk and data packet scheduling nodes to make packet routing and forwarding decisions without the notion of end-to-end routes. 
Furthermore, \cite{Ryu2010} proposes a two-level Back-Pressure with Source-Routing algorithm (BP+SR), which reduced the number of queues required at each node and reduced the size of the queues, thereby reducing the end-to-end delay.

To avoid the message flooding in the network, a straightforward approach is to replicate the message according to the utility metric rather than to replicate blindly. 
In \cite{Boloni2011} Turgut D. et.al. proposed the Bridge Protection Algorithm (BPA) which changes the behavior of a set of topologically important nodes in the network. These techniques protect the bridge node by letting some nodes take over some of the responsibilities of the sink. This method can significantly decrease the load of the nodes in the critical areas, while only minimally affecting the performance of the network.
\cite{Balasubramanian2010} treats DTN routing as a resource allocation problem that translates the routing metric into per-packet utilities that determine how packets should be replicated in the system. 
A random variable is used to represent the encounter between pairwise encountered nodes, thus nodes replicating messages in the descending order according to a marginal utility. 
\cite{Liu2012} presents two multi-copy forwarding protocols, called optimal opportunistic forwarding (OOF) and OOF-, which maximize the expected delivery rate and minimize the expected delay, respectively, while requiring that the number of forwarding operations of per message does not exceed a certain threshold. 
\cite{El-Azouzi2013} applies the evolutionary games to non-cooperative forwarding control in MDTNs, of which the main focus is on mechanisms to rule the participation of the relays to the delivery of messages in DTNs. 
In \cite{Martin-Campillo2013}, a utility function is introduced as the difference between the expected reward and the energy cost which is spent by the relay to sustain forwarding operations.

Besides, there are some inspired research achievements for transmissions in Mobile Sensor Networks (MSN). \cite{Ladislau2008} investigated the transmission scheduling problem for sensor networks with mobile sinks. The authors developed a dynamic programming-based optimal algorithm and described two decision theoretic algorithms which use only probabilistic models and do not require knowledge about their future mobility patterns. \cite{Turgut2009} describe and compare three practically implementable heuristic algorithms to control the transmission behavior of the nodes in the presence of mobile sinks. Besides, a graph-theory-based approach for calculating the optimal policy based on a complete knowledge is developed in \cite{Turgut2009}.

However, some of these works do not focus on the networks where nodes are mobile and simultaneously act in an ad hoc manner. 
In other word, there might be no available assistant controlled nodes or pre-deployed infrastructures for routing in these kinds of networks. 
Besides, most proposed routing protocols highly relay on  more than one-hop information. 
A few assume a predictable mobility pattern or a knowledge oracle set known in advance. 
And some others employ variety kind of information distribution mechanism to disseminate the needed knowledge about network topology. 
All these mentioned above may increase the difficulty to deploy those protocols in reality. 
Our work mainly differs from the research works above in two aspects.
First, we do not break up the ad hoc manner of networks, hence no controlled nodes or infrastructures being used. 
Second, we implement routing only based on one-hop available position information of neighbor nodes. The simulation results show that our algorithm have comparatively high delivery ratio and lowest hop count on average, furthermore having an acceptable overhead ratio.

\section{Preliminary and Motivation}
\label{prelim}

\subsection{Motivation}
\label{motiv}

Standard well-known distributed routing algorithms include distance vector (DV), path vector(PV) and link state (LS)\cite{Demmer2007}. 
In a routing algorithm using DV or PV, each node makes its next-hop decision by referring the recorded vectors obtained by the way that the node firstly collects its one-hop information and then assembles it with the vectors passed from others. 
However, a LS routing protocol assumes each node keeps track of the state of contacts and then distribute them to the network so as to make the whole network topology visible to each node. 
All the three above routing strategies usually need to obtain more than one-hop neighbor information, whereas on the one hand the network partitioning and high end-to-end delay may make the collected information out-dated and thus inaccurate, on the other hand , even the real time information can be obtained, the frequent disruptions in DTNs may incur the situation that updating messages flood in the whole network thus introducing high traffic loads.

To overcome the disadvantages of current routing schemes that make use of more than one-hop neighbor information, we propose a new routing scheme that relies only on the position information of its one-hop neighbor(s) for each node. 
On the first priority, the goal of our scheme is to maximize the delivery ratio.
Additionally, though optimizing the end-to-end delay is not of the vital emergency, high delivery ratio may in a way be benefit from quick delivery in DTNs, since nodes need not keep the copy of a delivered message thus saving the limited buffer resource and the restricted energy. 
Nevertheless, it is hard to find a measure which by taking would always strictly let both take a turn for the better. 
Luckily there is an intuition that shortening the distance between the current node holding the message and the destination node may consequently in some sense raise the message delivery ratio. 
Starting from this point, it enlightens us to design a routing scheme to spread the message to the direction of the destination as accurate as possible. 
We agree with the authors in \cite{Psaras2009} who observed that message duplication, on one hand, increases the delivery ratio and, on the other hand, decreases the delivery delay. 
However the naive replication strategy introduces unbearably high overheads into networks. 
So another task is to use duplication carefully in the trade-off between good performance and acceptable costs.
To achieve a comprehensive understanding of our goals for designing, we firstly list the key factors that could impact the performance and generality of routing, and then accordingly we make the assumptions of our network scene. 
The factors are listed as follows:
\begin{enumerate}
\item
The performance of some routing scheme highly depends on availability of the mobile pattern of nodes.
\item
Multi-copy strategy can add parallelism to routing by introducing more costs, which could accordingly cause high traffic loads, and consequently degrading the performance.
\item
Employing special controlled nodes can help routing by enhancing network connectivity at the cost of losing ad hoc network property.
\item
A practical routing algorithm in DTNs should work in a distributed online manner.
\end{enumerate}

With all the above considerations in mind, the basic assumption we made in this paper is that each node is equipped with a local positioning device. 
We do not use GPS for that a global positioning system may expose each user(s)’ individual position to the others. 
When the user does not want others to know its current position, it would not open the GPS service on its device. 
By referring the factors list above, we refine the most important principles of designing our routing algorithm respectively as follows:
\begin{enumerate}
\item
The routing algorithm makes no assumption of any available mobile pattern of nodes.
\item
The number of copies for each message is controlled by some means to avoid broadcast storming.
\item
No special controlled nodes or pre-deployed infrastructures are used in order to keep the ad hoc property of the whole network.
\item
No assumption of available global knowledge is made and thus our algorithm is more practical in realistic scenarios.
\end{enumerate}

\subsection{Mathematical Notations}
The format of mathematical notations is defined as following:
\begin{displaymath}
\mathcal{N}_{index}^{description}\left(par_1,par_2,\ldots \right)
\end{displaymath}
Therein $\mathcal{N}$ is a capital letter symbol, of which the superscript description is usually a simple description about the meaning of the notation. 
The optional subscript index is used only when $\mathcal{N}$ stands for a set, so we can use the subscript to identify each element of set $\mathcal{N}$. 
For example, each node in the network can be represented as $N_{EID}^{node}$ and hence $N^{node}$ represents the set of all nodes. 
$EID$ is an identifier for each node and is expressed simply as a word or a capital letter instead of the normal format defined in \cite{Cerf2007} for convenience in this paper. 
The parameters $(par1, par2, …)$ are also optional and used to represent other relative notation information. 
All the mathematical notations used in this paper are listed and explained in \tablename~\ref{math_table}.

\begin{table}
\centering
\caption{Mathematical notations}
\label{math_table}
\begin{tabular}{p{0.37\linewidth}<{\centering}||p{0.54\linewidth}<{\centering}}
\hline
\textbf{notation} & \textbf{meaning}\\
\hline \hline
$N_{X}^{node}$ & a node whose $EID$ is $X$ \\
\hline
$N^{node}$ & the set of all nodes \\
\hline
$P_{A}^{position}$ & the position of $N_{A}^{node}$ \\
\hline
$P^{neighbors}(A)$ & the positions set of $N_{A}^{node}$'s all neighbor nodes \\
\hline
$M_{m}^{bundle}$ & the message with $EID=m$ \\
\hline
$U_{m}^{utility}(N_{X}^{node},N_{A}^{node})$ &
the utility function to evaluate the quality of $N_{A}^{node}$ to $N_{X}^{node}$
for message $M_{m}^{bundle}$ \\
\hline
$V_{XA}^{vector}$ & vector determined by the two points $P_{X}^{position}$ 
and $P_{A}^{position}$ \\
\hline
$V_{\frac{\pi}{4}}^{vector}$ & vector determined by the two points 
$P_{X}^{position}$ and $P_{best}^{position}$ \\
\hline
$M^{delivered}$ & the number of delivered messages \\
\hline
$M^{relayed}$ & the number of relayed messages \\
\hline
$M^{created}$ & the number of created messages \\
\hline
$R$ & radius of node(s)' signal range \\
\hline
\end{tabular}
\end{table}

\section{Key problems in routing}
\label{keyprobs}
In this paper, we propose a Geographical Routing protocol based on ONE-hop information (GRONE). 
In this section, we give discussion and analysis about two key problems related to our routing algorithm, that are  how to choose relay nodes and how many relay nodes should be chosen.

According to principle 1 and principle 4 in \ref{motiv}, we have no idea about the position of the destination node, so the primary goal of our algorithm is to make messages spread uniformly to achieve the statistically highest possibility of reaching the destination node. 
Besides, we should try to ensure that messages spread as quickly as possible to achieve a short delivery latency. 
When using GRONE, all the routing information needed by a node A is listed as follows:
\begin{enumerate}
\item
The position of the $N_{A}^{node}$, denoted by $P_{A}^{position}$.
\item
The positions set of $N_{A}^{node}$'s all neighbor nodes, denoted by $P^{neighbors}(A)$.
\end{enumerate}

\subsection{Number of Replicas}
\label{num_of_relays}
When the network resources such as energy and buffer capacity are sufficient, using more relay nodes usually means higher delivery ratio and lower end-to-end delay. 
In fact, whenever message replicas are distributed among several nodes, if some of these nodes disappear(e.g. failures, destruction, power outage, alternation between sleep and active modes, etc.) the task of message delivery is delegated to the other remaining active nodes \cite{Khabbaz2012} and as a result the chance of a node encountering the destination raises with the increase in the number of carrier nodes. 
For example, once $N_{A}^{node}$ replicates the message $M_{m}^{bundle}$ to $N_{B}^{node}$,$N_{B}^{node}$ then is also responsible for delivering $M_{m}^{bundle}$. 
Next $N_{A}^{node}$ and $N_{B}^{node}$ respectively make their own routing decision so that two parallel paths to the destination node are generated. 
Nevertheless, network resources in DTNs are usually strictly constrained and limited and consequently routing in DTNs should take all the factors mentioned above into account. 
On one hand redundant messages occupy a lot of bandwidth and then be prone to cause network congestions. 
On the other hand processing or buffering messages consumes energy and nodes often choose to stop transmission when the energy is short \cite{Li2011} or to drop the message when no extra buffer space available. 
So it is necessary to limit the number of relay nodes in order to avoid dispensable redundancy \cite{Spyropoulos2008}.

\begin{figure*}
\centering
\subfigure[uniformly spreading\label{spread_uniform}]
{\includegraphics[width=0.3\linewidth]{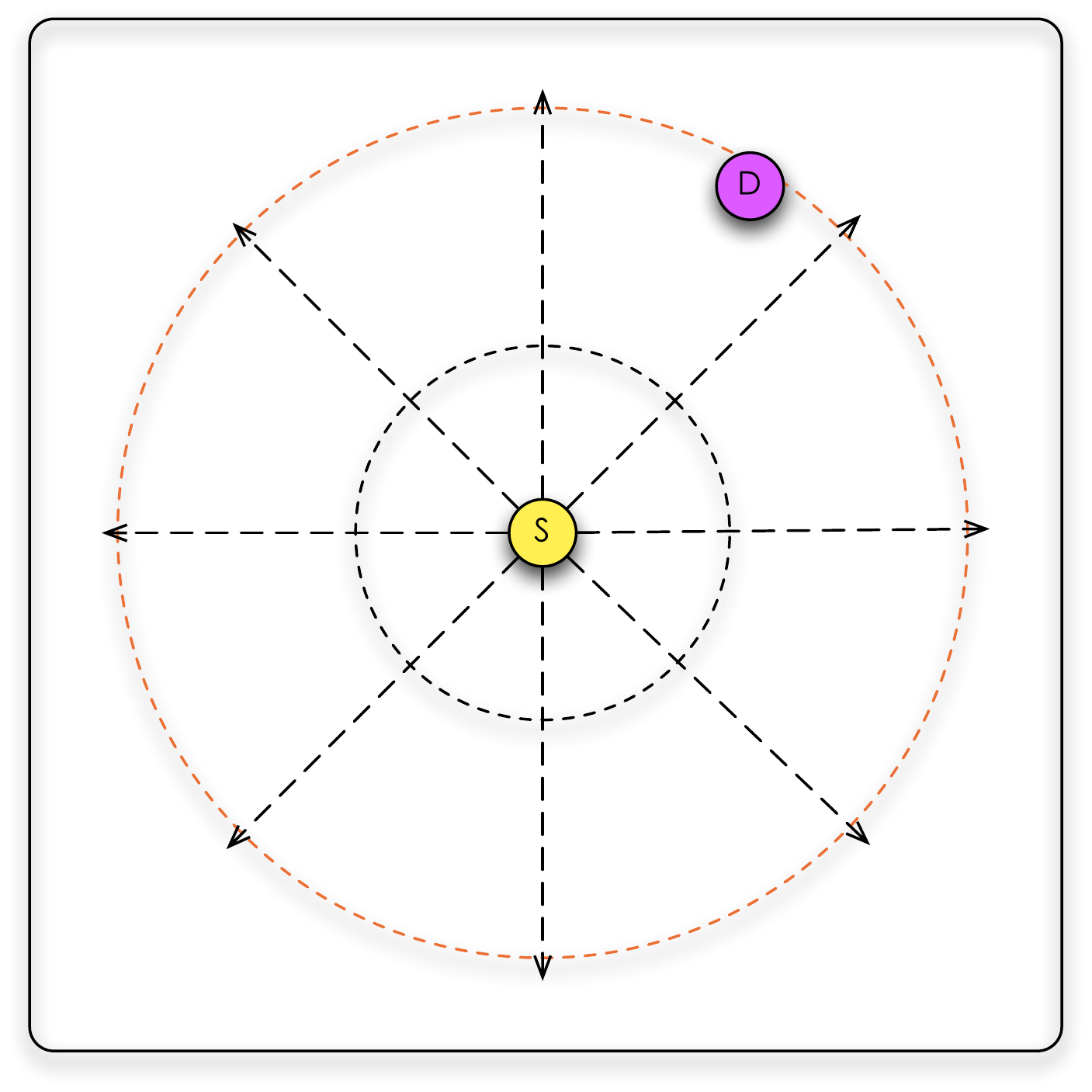}}
\hfil
\subfigure[nearest node to the source\label{nearest}]
{\includegraphics[width=0.3\linewidth]{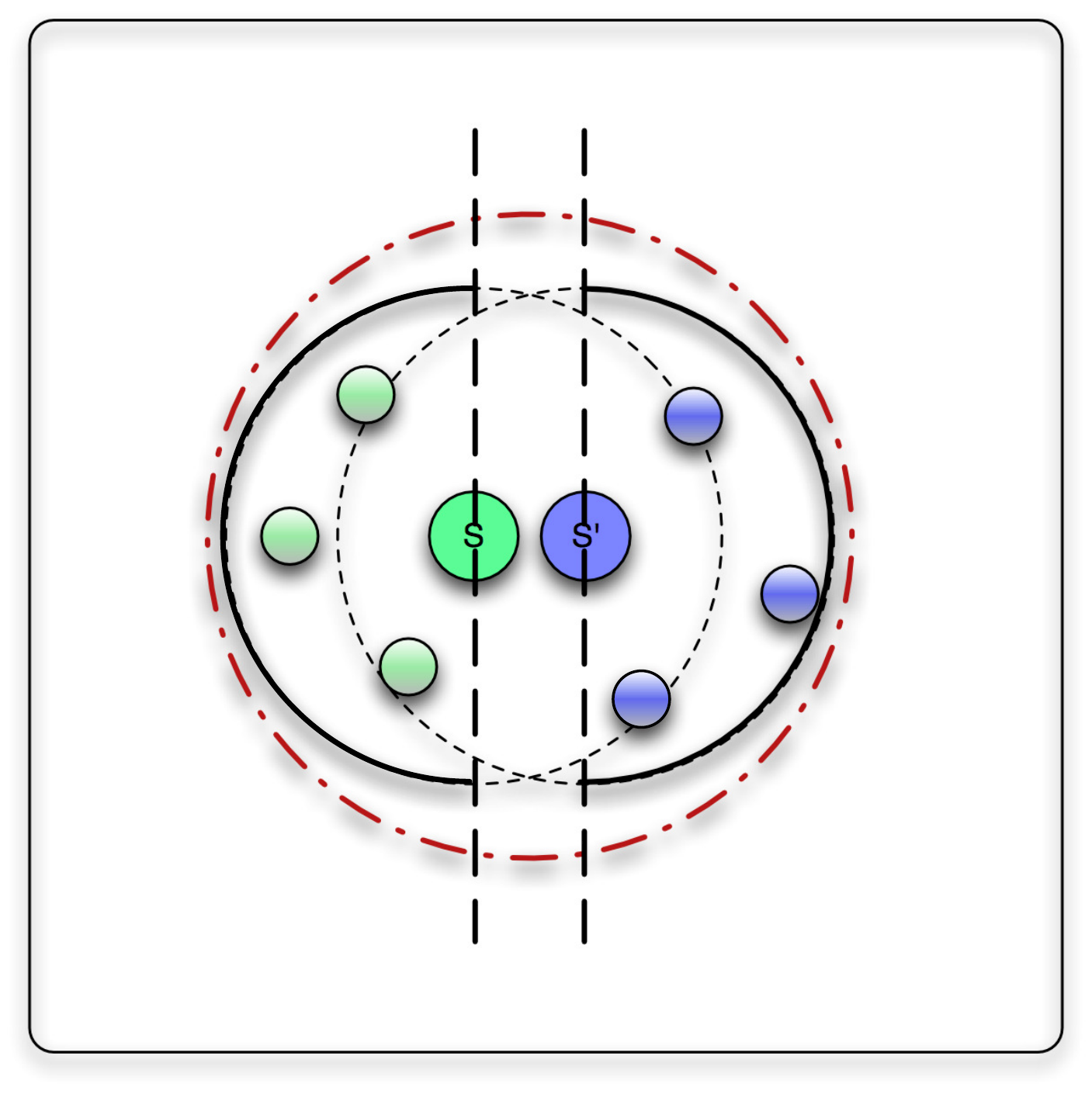}}
\hfil
\subfigure[farthest node to the source\label{farthest}]
{\includegraphics[width=0.3\linewidth]{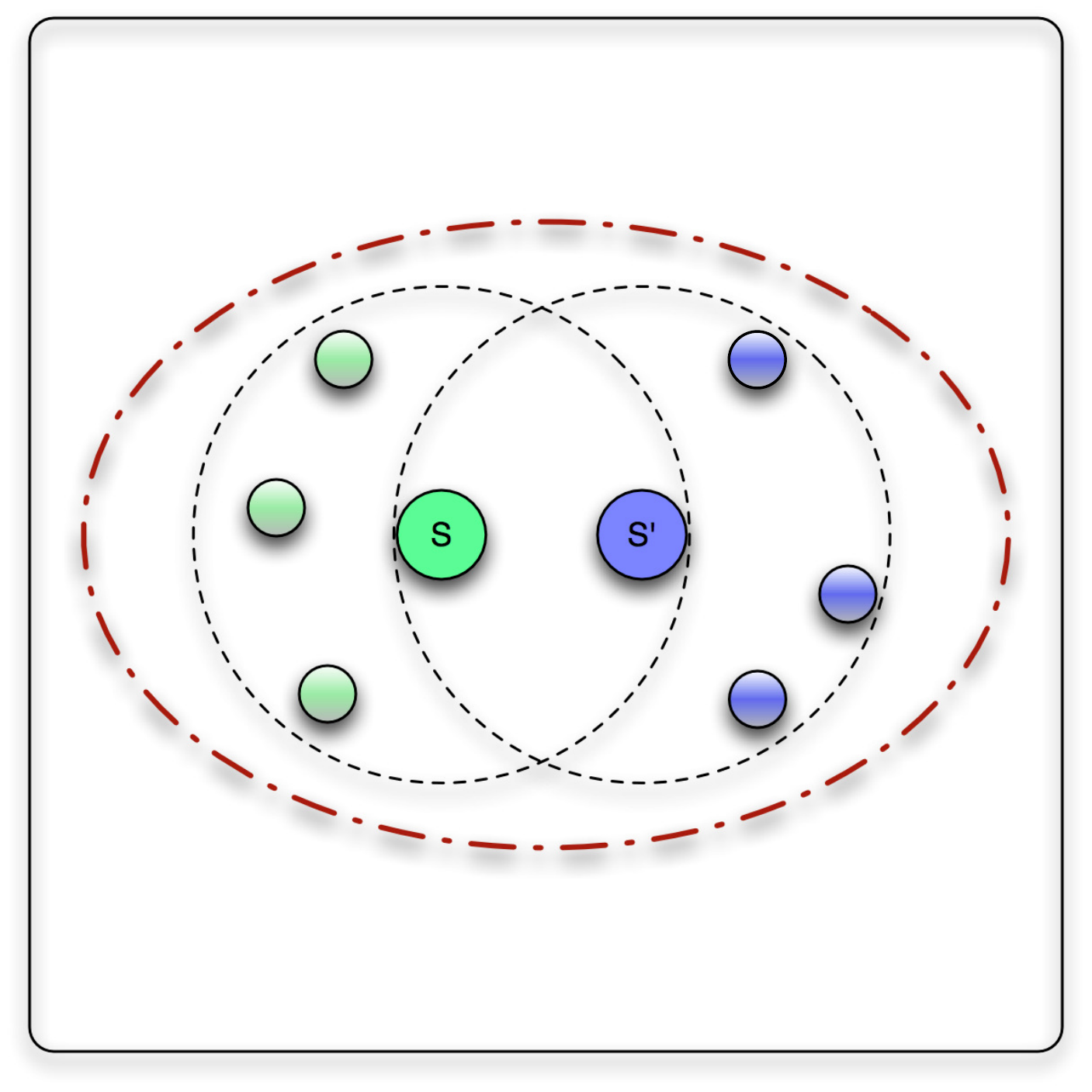}}
\caption{Appropriate number of relay nodes.}
\end{figure*}

To guarantee the message spread uniformly in a radiatory way, as shown in \figurename~\ref{spread_uniform}, one method is to let the source node pick four nodes as replication targets.
However, in the next step if $N_{C}^{node}$ still employs the same forwarding strategy as $N_{S}^{node}$ does, one of the forwarding direction of $N_{C}^{node}$ would be towards $N_{S}^{node}$ and then one copy of the message would be forwarded back to a position near the source $N_{S}^{node}$, which furthermore causes that messages are held by many nodes covering a roughly same area hence wasting the buffer resource. 
Besides, it is not easy for a node to choose as many as four next-hop relay nodes every time in a sparse mobile network. 
The key to cope with this problem is that we should let the message keep spreading uniformly while simultaneously not losing the consistency of the routing strategy.
Our method is to let the source node $N_{S}^{node}$ firstly pick the nearest node $N_{S'}^{node}$ and replicate messages to it, and then let both $N_{S}^{node}$ and $N_{S'}^{node}$ choose their own relay nodes in each semi-circle area, as illustrated in \figurename~\ref{nearest}. 
The reason why we choose the nearest instead of the farthest node is that though a farther node may increase the covering area of the message temporarily, the message covering area would approximately be a ellipse instead of a circle, as shown in \figurename~\ref{farthest}. 
In more detail, if the covered area is not uniform at the beginning, it would become more and more non-uniform consequently degrading the average performance of the algorithm. 
So we primarily try to keep the covered area more similar to a circle. 
Secondly, the source node, relay nodes should always keep spreading the message oppositely to the direction to the source node in order to increase the message covering area. 
In \figurename~\ref{choose_two}, $N_{S}^{node}$ is the source node of a certain message, and the large circle centered at S similarly represent the message covering area. 
$N_{S}^{node}$ is the node currently chosen to spread the message farther, and we make a line vertical to the line $SM$ and hence getting two sectors where $N_{A}^{node}$ and $N_{B}^{node}$ locate. 
If the routing can keep working as this manner, then we can guarantee that the message covering area would increase and finally cover the destination node, so the delivery task can be finished.

\subsection{Position of the Next Hop}
When the direction is determined, choosing a farther relay node may reduce the superfluous message covering area so as to efficiently utilize available network resources. 
As shown in \figurename~\ref{best_pos}, assuming that $N_{X}^{node}$ is the current node to make the routing choice, we can find it intuitively that $N_{best}^{node}$ would be the best relay node, since it is the farthest node to $N_{X}^{node}$. $N_{worst}^{node}$ locates on the same position of $N_{X}^{node}$ and we regard this as the worst choice since replicating the message to $N_{worst}^{node}$ does not make any increase of the message covering area.

\begin{figure}
\centering
\includegraphics[width=0.5\linewidth]{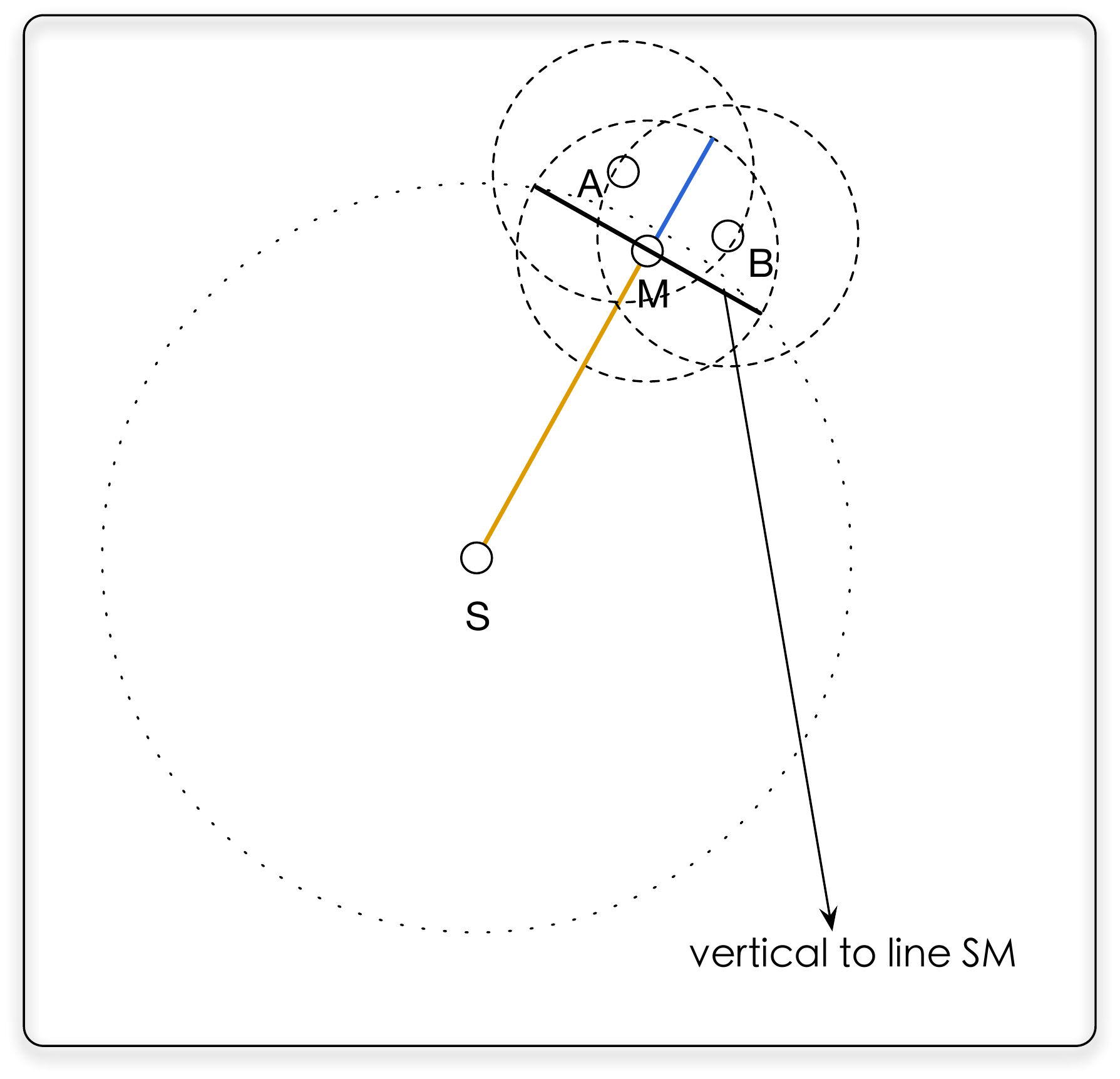}
\caption{Choose two nodes to achieve a radiatory routing.}
\label{choose_two}
\end{figure}

\begin{figure}
\centering
\includegraphics[width=0.7\linewidth]{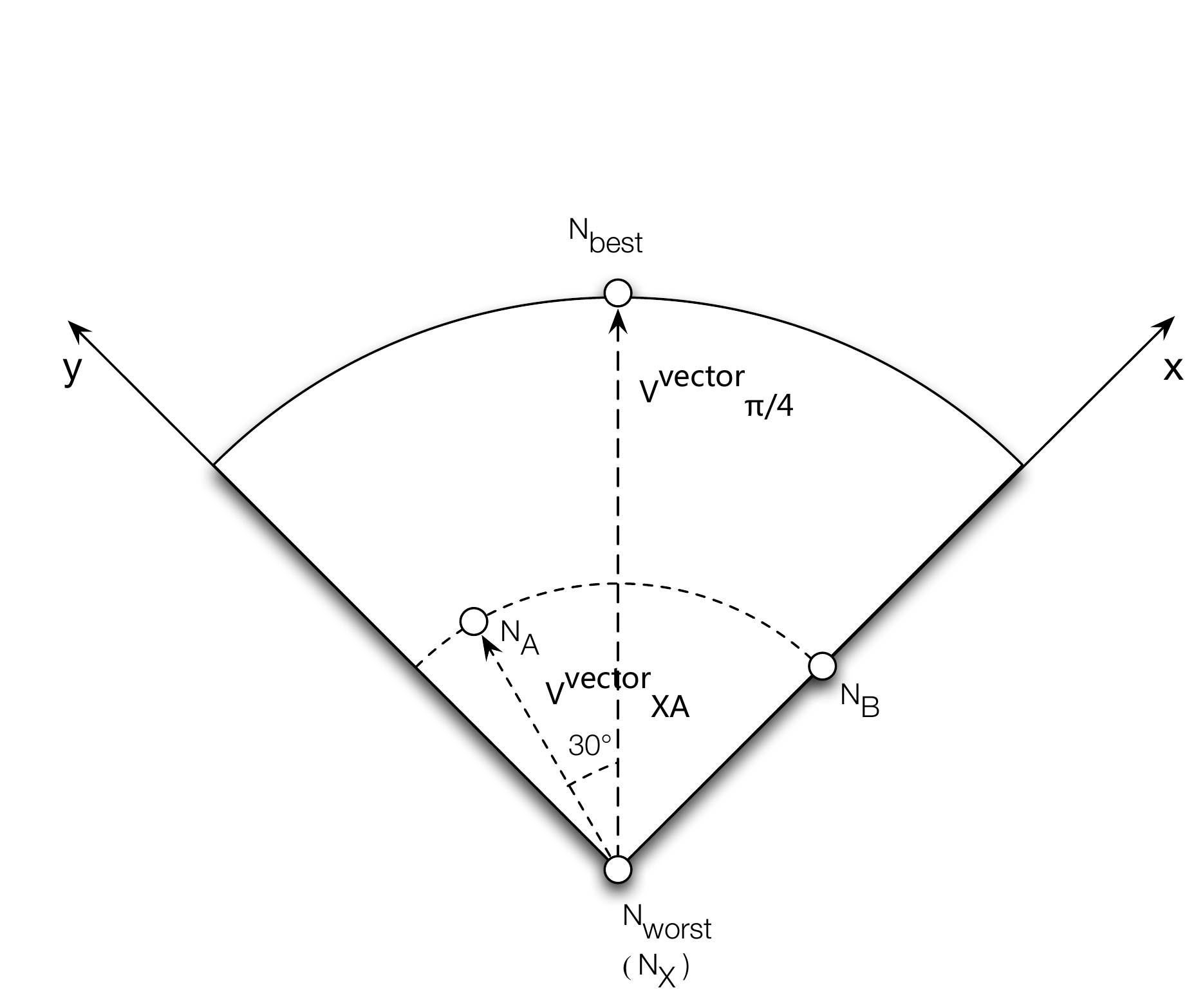}
\caption{The best and the worst position of relay nodes.}
\label{best_pos}
\end{figure}

The next question is that which is the most suitable direction. 
Since each node makes its own decision and have no idea about any of its surrounding nodes’ choices, a good way to guarantee the routing algorithm perform well from the statistical perspective is to always make a best choice based on estimating its surrounding nodes’ expected choices. 
In the next section we will design a utility function by which we can achieve this goal and quantify the suitability of each promising relay node.

\section{Routing design}
\label{routing}

\subsection{Utility Function for Choosing the Relay Node}
Since we have no information about the destination node, e.g. the distance or the direction, the only way to maximize the possibility of contacting the destination is to spray messages as uniformly as possible. 
For conveniently explaining our solution, we firstly define the name “peer node”:
\begin{definition}[peer node]
\label{peer node}
For a certain node $N_{A}^{node}$, its peer node refers to the node also chosen as the relay by the direct predecessor node of $N_{A}^{node}$. 
\end{definition}
As shown in \figurename~\ref{choose_two}, $N_{B}^{node}$ is a peer node of $N_{A}^{node}$ and vice versa, since that both $N_{A}^{node}$ and $N_{B}^{node}$ have been chosen as relay nodes by their predecessor node $N_{M}^{node}$. In order to maintain a uniform spraying, each node’s next-hop choice should also refer to the choices of its peer(s). However, a node could not rely on a “query mechanism” to obtain its peer’s choices, otherwise the principle P4 would be violated, since if that we could not guarantee the node and its peer to be in the reliable transmission range of each other. Conversely, the query is not a “one-hop” operation, thus violating the principle.

The only way to obtain the peer node’s option is to let each node “guess” the direction that its peer is most likely to choose. 
However, the possible relay node position is determined by its possibility to show up in each position of the sector. 
Thus our task is to statistically get the most possible direction for the prospective relay nodes to show up. 
And from the statistical significance it is known that an expectation of the direction is the best choice and leads to a minimum variance. 
And consequently we aim to get the expected shown up direction for a node in the sector. 
All the node(s)’ position are represented as set $P^{position}$ and the cooperation system is built as shown in \figurename~\ref{best_pos}. 
Because each node has the same probability of showing up in any position of the sector, $P_{X}^{position}$ obeys a uniform distribution, and thus we have the probability density function, as shown in equation \ref{equ1}.
\begin{equation}
\label{equ1}
f(x,y)=\frac{1}{S_{sector}}=\frac{1}{\frac{1}{4}\pi R^2}
\end{equation}
And we can get the expected direction by equation \ref{equ2}.
\begin{equation}
\label{equ2}
\begin{aligned}
E[\theta]&=\int\!\!\!\int_{S}f_{X,Y}(x,y)\theta dS \\
&=\int_0^{\frac{\pi}{2}}\!\!\!\int_0^{R}\theta\cdot\frac{1}{\frac{1}{4}\pi R^2}\cdot
rdrd\theta \\
&=\left.\frac{1}{2}\theta^{2}\right|_{0}^{\frac{\pi}{2}}\cdot\frac{1}{\frac{1}{4}\pi R^2}\left.
\cdot\frac{1}{2}r^2\right|_{0}^{R}\\
&=\frac{\pi}{4}
\end{aligned}
\end{equation}

Thus the wise way to keep messages spray uniformly is to let each node always choose the node nearest to the expected direction, as calculated above, $\pi\left/4\right.$. 
Like most routing schemes in DTNs, GRONE is also based on the predefined utility describing the appropriateness of the forwarding operation for a specific message. 
We define the utility function as equation \ref{equ3}.
\begin{equation}
\label{equ3}
\begin{aligned}
&U_{m}^{utility}\left.\right(N_{X}^{node},N_{A}^{node}\left)\right.\\
=&\frac{1}{2R}D^{distance}\left.\right(N_{X}^{node},N_{A}^{node}\left)\right. \\
&+\left.\right(1+\frac{\sqrt{2}}{2}\left)\right.
\left.\right(\cos\left.\right<V_{XA}^{vector},V_{\frac{\pi}{4}}^{vector}\left>\right.-
\frac{\sqrt{2}}{2}\left)\right.
\end{aligned}
\end{equation}
Besides, for $\forall~N_{Y}^{node}$, we let
\begin{displaymath}
\cos\left.\right<V_{X,X}^{vector},V_{X,Y}^{vector}\left>\right.=\frac{\sqrt{2}}{2}.
\end{displaymath}

As shown in \figurename~\ref{best_pos}, we denote the direction from $N_{X}^{node}$ to $N_{A}^{node}$ and the direction of the bisector of the sector by $V_{XA}^{vector}$ and $V_{\frac{\pi}{4}}^{vector}$ respectively. 
Then we could explain the meaning of equation \ref{equ3}. 
The first part
\begin{displaymath}
\frac{1}{2R}D^{distance}\left.\right(N_{X}^{node},N_{A}^{node})
\end{displaymath}
is the criteria for evaluating the distance.
The rest of equation \ref{equ3}
\begin{displaymath}
\left.\right(1+\frac{\sqrt{2}}{2}\left)\right.
\left.\right(\cos\left.\right<V_{XA}^{vector},V_{\frac{\pi}{4}}^{vector}\left>\right.-
\frac{\sqrt{2}}{2}\left)\right.
\end{displaymath}
describes the degree of approximation between $V_{XA}^{vector}$ and the
expected optimal direction $V_{\frac{\pi}{4}}{4}$ discussed above.
Thus the range of either part is $\left.\right[0,1/2\left]\right.$, and then
we have $U_{m}^{utility}\in[0,1]$.
Investigating the two nodes $N_{best}^{node}$ and $N_{worst}^{node}$ in 
\figurename~\ref{best_pos}, the direction of $N_{best}^{node}$ is $V_{\frac{\pi}{4}}^{vector}$
and the distance between it and $N_{X}^{node}$ is $R$.
Since we can get the cosine value of the angle formed by these two vectors
\begin{displaymath}
\cos\left.\right<V_{X,best}^{vector},V_{\frac{\pi}{4}^{vector}}^{vector}\left>\right.=1
\end{displaymath}
we then obtain the highest utility value from equation \ref{equ3}:
\begin{displaymath}
\begin{aligned}
&U_{m}^{utility}\left.\right(N_{X}^{node},N_{best}^{node}\left)\right. \\
=&\frac{1}{2R}\cdot R+\left.\right(1+\frac{\sqrt{2}}{2}\left)\right.
\left(\right.1-\frac{\sqrt{2}}{2})=1
\end{aligned}
\end{displaymath}

Meanwhile, node $N_{worst}^{node}$ locates on the same position as $N_{X}^{node}$,
and thus we have
\begin{displaymath}
D^{distance}\left.\right(N_{X}^{node},N_{worst}^{node}\left)\right.=0
\end{displaymath}
and
\begin{displaymath}
\cos\left.\right<V_{X,best}^{vector},V_{\frac{\pi}{4}^{vector}}^{vector}\left>\right.=\frac{\sqrt{2}}{2}
\end{displaymath}
Under this case, we get the lowest utility as following:
\begin{displaymath}
\begin{aligned}
&U_{m}^{utility}\left.\right(N_{X}^{node},N_{worst}^{node}\left)\right.\\
=&\frac{1}{2R}\cdot 0+(1+\frac{\sqrt{2}}{2})(\frac{\sqrt{2}}{2}-\frac{\sqrt{2}}{2})=0 
\end{aligned}
\end{displaymath}
We can get $U_{m}^{utility}(X,A)$ and $U_{m}^{utility}(X,B)$ in the same way:
\begin{displaymath}
\begin{aligned}
&U_{m}^{utility}\left.\right(N_{X}^{node},N_{A}^{node}\left)\right. \\
=&\frac{1}{2R}\cdot\frac{R}{2}+(1+\frac{\sqrt{2}}{2})(\frac{\sqrt{3}}{2}
-\frac{\sqrt{2}}{2}) \\
\approx &0.5213 \\
&U_{m}^{utility}(N_{X}^{node},N_{B}^{node}) \\
=&\frac{1}{2R}\cdot\frac{R}{2}+(1+\frac{\sqrt{2}}{2})(\frac{\sqrt{2}}{2}
-\frac{\sqrt{2}}{2}) \\
=&0.5 
\end{aligned}
\end{displaymath}
Thus we have the sequence of the choice priority:
\begin{displaymath}
N_{best}^{node}>N_{A}^{node}>N_{B}^{node}>N_{worst}^{node}
\end{displaymath}

Equation \ref{equ3} guarantees that when locating on the same direction, a farther prospective relay node leads to a larger utility value. 
Meanwhile Equation \ref{equ3} ensures that when the distances are same, a direction being more approximate to the expectation has a higher utility value. 
Thus the value of $U_{m}^{utility}$ is determined by both direction and distance of the relay node, which are taken into consideration when making routing decisions so that the position information of neighbor nodes could be efficiently utilized.

\subsection{Coping with the Message Redundancy}
We represent the set of all messages in the network as $M^{bundle}$, and when $N_{A}^{node}$ generates $M_{m}^{bundle}$, any position within the reliable signal range of $N_{A}^{node}$ can be viewed as a “covered point” and $N_{A}^{node}$ should be responsible for the transmission task in the covered area. 
Let us image that $N_{B}^{node}$ is also holding a copy of $M_{m}^{bundle}$ and the distance between $N_{A}^{node}$ and $N_{B}^{node}$ is less than $R$, as shown in \figurename~\ref{mrd}. 
\begin{figure}
\centering
\includegraphics[width=0.6\linewidth]{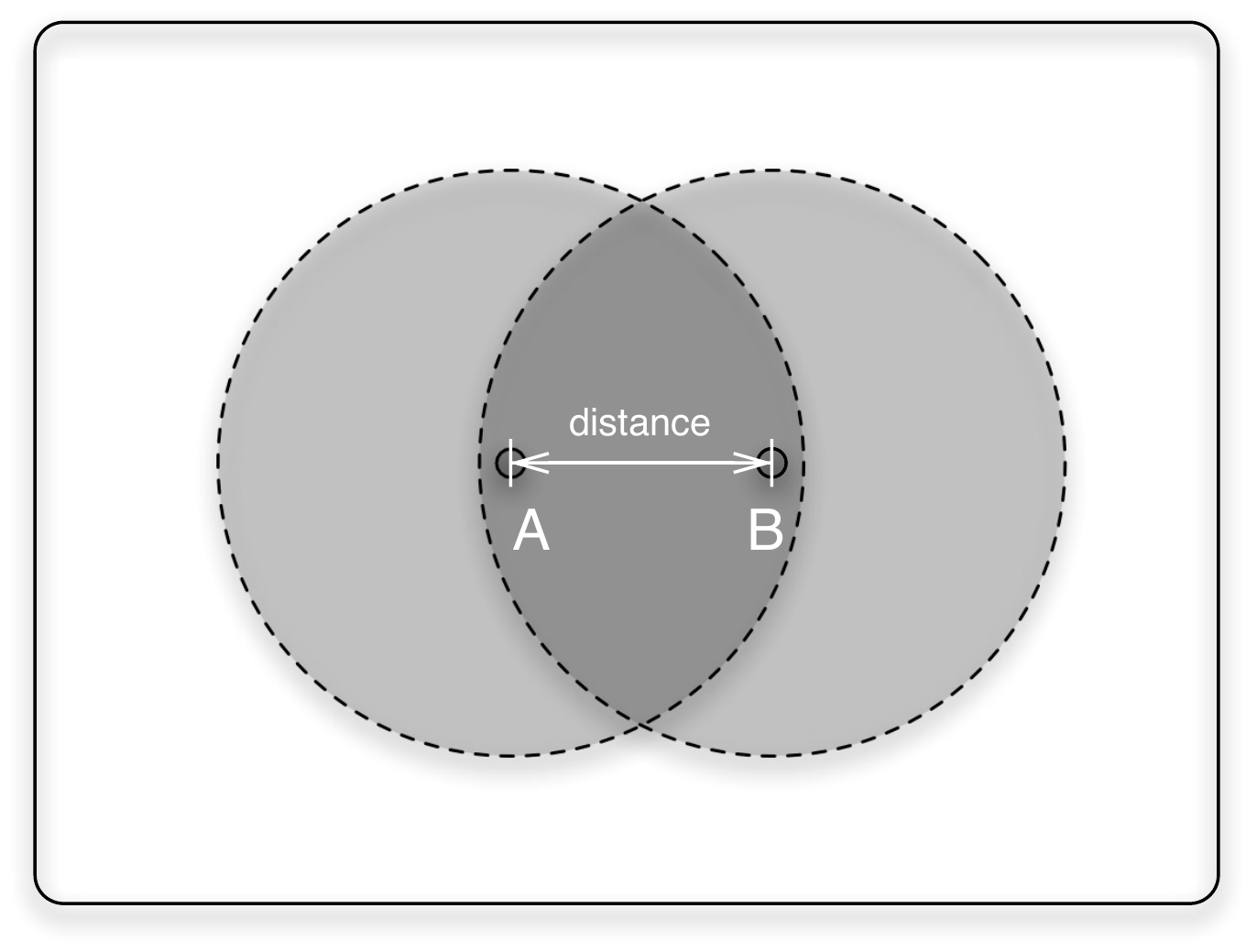}
\caption{Overlapping area of message $M_{m}^{bundle}$}
\label{mrd}
\end{figure}
In this case there is an overlapped area where both $N_{A}^{node}$ and $N_{B}^{node}$ have the ability to finish the transmission task for $M_{m}^{bundle}$. 
If the overlapped area is very large, then there is no need for both $N_{A}^{node}$ and $N_{B}^{node}$ to hold the same copy of $M_{m}^{bundle}$. 
In other words, either $A$ or $B$ takes charge of a similarly same area for transmission and thus wasting the network resource. 
To describe the mentioned above redundancy, we define a variable, named as Message Redundancy Degree (MRD), as following:
\begin{definition}[Message Redundancy Degree (MRD)]
Assuming that $N^{overlap}$ is a set containing one or more than one nodes and all the nodes within it hold the message $M_{m}^{bundle}$, we denote the \emph{k-order Message Redundancy Degree (k-MRD)} of $N^{overlap}$ for $M_{m}^{bundle}$ by $S_{m}^{overlap}\left.\right(k,N^{overlap}\left)\right.$, and we have
\begin{displaymath}
\begin{aligned}
&S_{m}^{overlap}\left.\right(k,N^{overlap}\left)\right. \\
=&\textnormal{the size of area covered by}\ k\ \textnormal{node(s)' in}\ N^{overlap}
\end{aligned}
\end{displaymath}
\textnormal{where} $k\leq\left.\right|N^{overlap}\left|\right.$.
\end{definition}

In \figurename~\ref{mrd}, the dark grey area is the 2-order MRD of $N_{A}^{node}$，$N_{B}^{node}$ for $M_{m}^{bundle}$, represented as $S_{m}^{overlap}$. Then the relationship of $S_{m}^{overlap}$ and \[d=D^{distance}(N_{A}^{node},N_{B}^{node})\] can be derived, for simplicity we let
\begin{displaymath}
S(d)=S_{m}^{overlap}\left.\right(2,\left.\right\{N_{A}^{node},N_{B}^{node}\left\}\right.\left)\right.
\end{displaymath}
thus having
\begin{displaymath}
\begin{aligned}
S(d)&=S_{m}^{overlap}\left.\right(2,\left.\right\{N_{A}^{node},N_{B}^{node}\left\}\right.\left)\right.\\
&=2R^{2}\left.\right|\arccos \frac{d}{2R}\left|\right.
-dR(1-\frac{d^2}{4R^2})
\end{aligned}
\end{displaymath}
and when $d=R\left/\right.2$, we have $S(R/2)≈2.167R^2$
The overlapping area is approximately 70\% of a reliable transmission area, and we denote this value as 2-Margin. 
When there exists a 2-MRD greater than 2-Margin between any two nodes in the network, we delete a redundant copy of the same message from either of them so as to save the limited network resources and lower the traffic loads. 
The following theorem declares the necessity of this strategy for coping the redundancy.
\begin{theorem}
If GRONE keeps running by always choosing the best node selected by function $U^{utility}$, then at some moment in the future there would be two nodes of which the 2-MRD is larger than 2-Margin.
\end{theorem}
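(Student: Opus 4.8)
The plan is to argue by contradiction and to reduce the statement to an elementary packing fact in a bounded region. First I would record the monotonicity of the lens area: the function $S(d)$ derived just above the theorem is strictly decreasing on $[0,2R]$, so the assertion ``the $2$-MRD of two carriers of $M_m^{bundle}$ exceeds $2$-Margin'' is equivalent to ``those two carriers lie at distance strictly less than $R/2$'', because $2$-Margin $=S(R/2)$. Hence it suffices to show that if GRONE keeps replicating $M_m^{bundle}$ without ever invoking a redundancy-deletion step, then at some future instant two nodes holding $M_m^{bundle}$ are within distance $<R/2$ of each other.

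Next I would set up the contradiction hypothesis: suppose that at every future instant the set $N^{overlap}(t)$ of nodes currently carrying $M_m^{bundle}$ is $R/2$-separated, i.e. any two of its members are at distance $\ge R/2$. Working inside the bounded operational region (the same finite arena assumed throughout the simulation section, say of area $\mathcal{A}$), the open disks of radius $R/4$ centred at the carriers are then pairwise disjoint and are all contained in the $R/4$-dilation of that region, whose area is at most $\mathcal{A}+O(R)$. A volume count gives a uniform bound $|N^{overlap}(t)|\le C$ for all $t$, where $C=\lfloor (\mathcal{A}+O(R))/(\pi R^{2}/16)\rfloor$ is a fixed constant independent of $t$.

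Then I would exploit ``GRONE keeps running'': every successful forwarding adds one new carrier of $M_m^{bundle}$, and in the absence of the redundancy mechanism no carrier is ever dropped, so $|N^{overlap}(t)|$ is non-decreasing in $t$; moreover, since a best relay selected by $U^{utility}$ is always taken whenever an eligible neighbour appears in the required sector, the carrier count keeps growing as long as replication is possible. Either this count exceeds $C$ at some instant --- immediately contradicting the separation bound of the previous step --- or it saturates at a value $\le C$, in which case node mobility (recurrent under any of the standard mobility models considered here) eventually brings two of the still-separated carriers to within distance $R/2$, again contradicting the hypothesis. In all cases the contradiction hypothesis fails, so some future instant exhibits a pair of nodes whose $2$-MRD is larger than $2$-Margin.

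I expect the delicate point to be pinning down the precise meaning of ``GRONE keeps running'': one must exclude the degenerate behaviour in which replication stalls forever --- no eligible relay ever appears in the mandated sector --- while the few existing copies drift apart and never approach within $R/2$. I would close this gap either by stipulating, as part of the hypothesis, that forwarding events recur indefinitely, or by appealing to recurrence of the mobility process so that two carriers are eventually co-located within $R/2$ irrespective of whether further copies are created. The rest --- the monotonicity reduction and the packing bound --- is routine, and it is also worth making explicit as a standing assumption that the network region is bounded, since this is used implicitly in the area count.
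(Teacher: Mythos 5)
Your first reduction---using the monotonicity of $S(d)$ to replace ``2-MRD exceeds 2-Margin'' by ``two carriers lie within distance $R/2$''---is exactly the paper's opening move. After that, however, you take a packing/counting route that the paper does not take, and it has a genuine gap. Your counting branch only produces a contradiction if the number of carriers of $M_m^{bundle}$ can exceed the packing constant $C\approx 16\mathcal{A}/(\pi R^2)$; but the carrier count is trivially capped by the total node population, which need not exceed $C$ (in the paper's own setting, $\mathcal{A}=10^6\,\mathrm{m}^2$ and $R=100\,\mathrm{m}$ give $C\approx 509$, against only $120$ nodes), so in precisely the regime the paper studies that branch never fires. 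Your fallback branch then rests entirely on recurrence of the mobility process, which is an assumption the paper explicitly disclaims (design principle 1: no assumption on the mobility pattern). A further warning sign is that your argument never uses $U^{utility}$ at all: it would prove the same conclusion for any persistent replication scheme, whereas the theorem's hypothesis is specifically that the \emph{best} node under $U^{utility}$ is always chosen.

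The paper's proof is instead a local geometric convergence argument that needs neither a bounded arena nor recurrence. A predecessor picks one utility-optimal relay in each half-sector, so the two relays $M,N$ sit at distance $R$ along bisectors separated by $\pi/2$, hence $|MN|=\sqrt{2}R$ is constant across the spreading front; their respective successors $M',N'$ form an isosceles trapezoid with $|N'M'|=\sqrt{2}R-2R\cos(\theta+\pi/4)$, where $2\theta$ is the angle the pair subtends at the original source. As the front moves outward, $|SA|$ grows while $|MN|$ stays fixed, so $\theta\to 0$ and $|N'M'|\to 0$; hence two adjacent-branch carriers eventually come within $R/2$ of each other. That shrinking-chord mechanism, driven by the specific geometry of the utility-optimal positions, is the key idea your proposal is missing.
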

\begin{proof}
Since the function $S(d)$ is monotone decreasing, 
we only need to prove that we can find two nodes 
between which the distance is less than $R/2$ at 
some moment in the future. 
As shown in \figurename~\ref{proof} 
if $\angle MSN=2\theta$ and $\angle ASN=\frac{1}{2}\angle MSN=\theta$, 
then $\angle MNR=\theta$. Additionally since $QR$ is vertical 
to $SN$, we have $\angle RNN'=\pi/4$. 
Thus since $\angle MNN'=\angle MNR+\angle RNN'$, 
we have $\angle MNN'=\theta+\pi/4$, 
where $\theta\in[0,\pi/4]$. 
Consequently we can derive from the isosceles trapezoid 
$MNN'M'$ to derive the formulation of the relationship 
between $|N'M'|$ and $\theta$, as following
\begin{displaymath}
|N'M'|=\sqrt{2}R-2R\cos(\theta+\pi/4)
\end{displaymath}
By running GRONE, the distance between ``future 
$N_{A}^{node}$'' and $N_{S}^{node}$ will be 
larger and larger. 
Since $N_{M}^{node}$ and $N_{N}^{node}$ both locate 
on best position evaluated by $U^{utility}$,  
$|MN|$ is a constant value $\sqrt{2}R$. 
So when $|SA|$ becomes larger, $\theta$ would decrease, 
which hence shortens $|N'M|$. 
Because $\theta\in(0,\pi/4]$, we have $N'M\in(0,\sqrt{2}R]$. 
Thus there would be a pair of very close nodes between 
which the distance is less than R/2.
\end{proof}
\begin{figure*}
\centering
\subfigure[$A$'s two relay nodes $M$ and $N$]
{\includegraphics[width=0.37\linewidth]{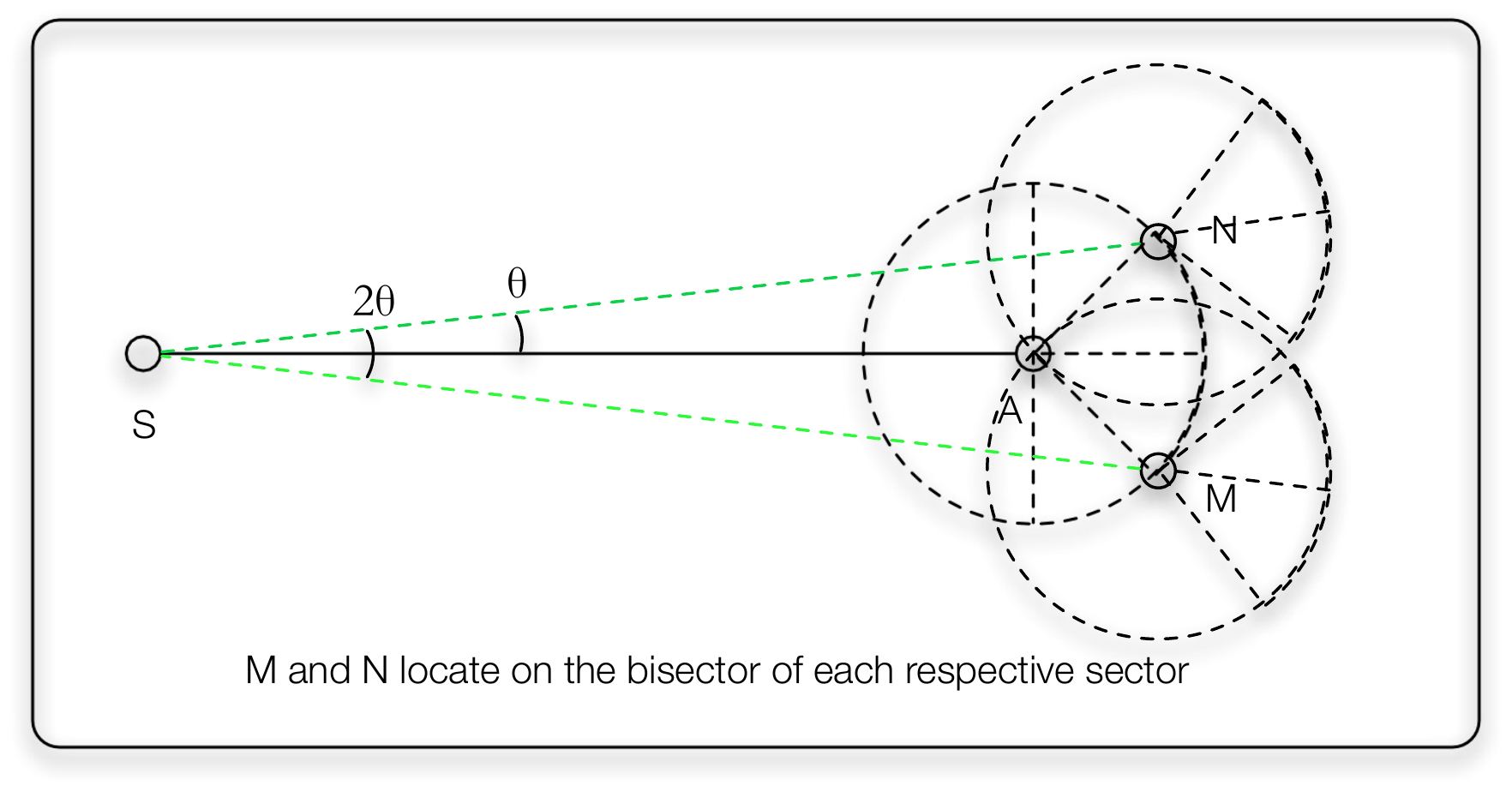}}
\hfil
\subfigure[$QR$ is vertical to $SN$]
{\includegraphics[width=0.37\linewidth]{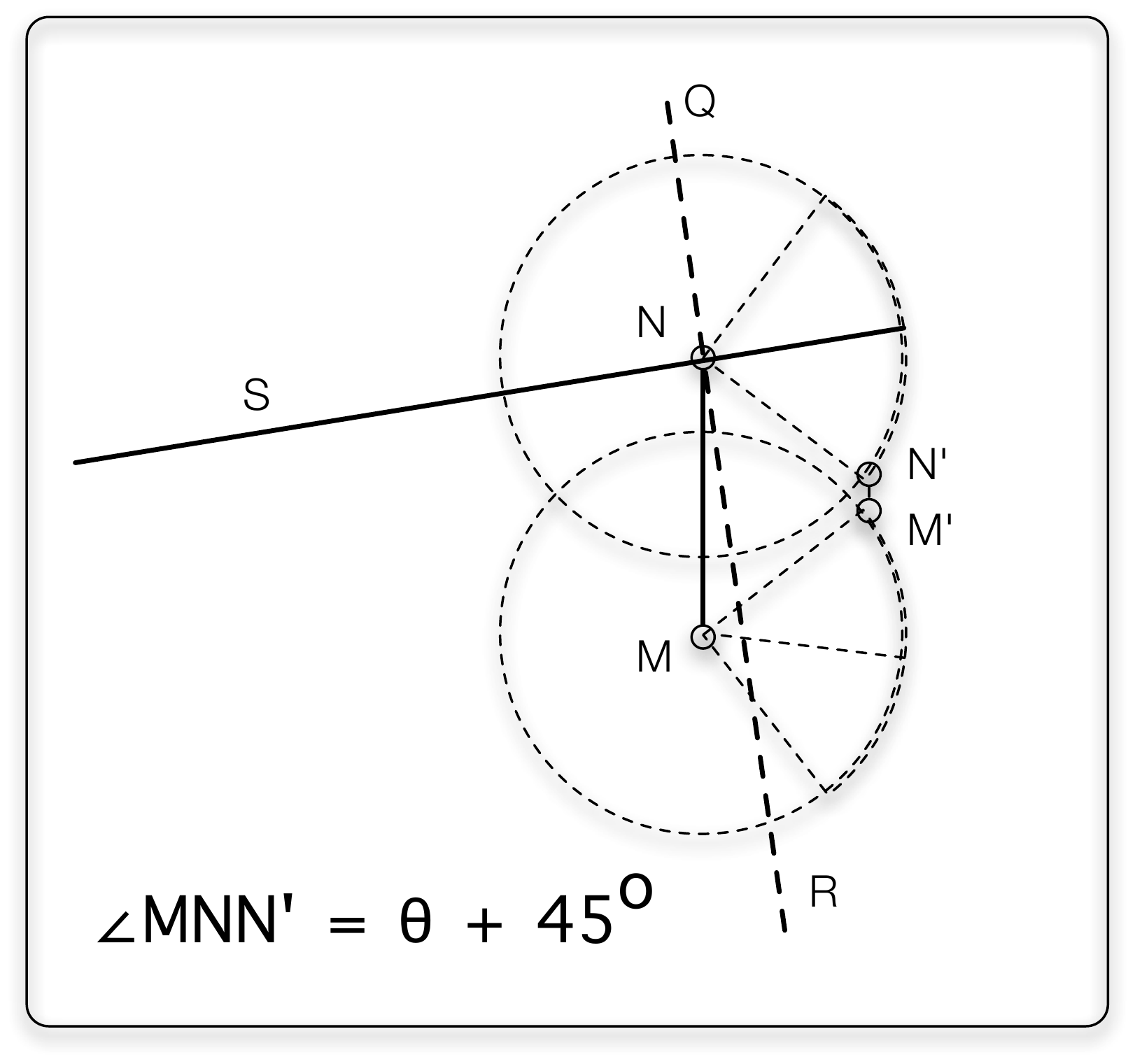}}
\caption{Get the isosceles trapezoid $NMM'N'$.}
\label{proof}
\end{figure*}

Now we have finished the discussion about the critical 
issues in our routing design, relay nodes selection and 
dealing with redundancy. In the next section we will 
describe our routing protocol mechanisms.

\subsection{Routing Protocol}
Since GRONE does not rely on any global or more than 
one hop information for each node, we do not need any 
information distribution schemes such as DV or LS and 
thus only let each node broadcast a “Hello” message 
when it joins the network. 
So the non-data traffic for 
each node will only be relative to the node density in 
the network.
A control “Hello” message 
includes some information about its originated node e.g., 
$EID$, position and a summary vector of all the messages 
held in it\cite{Vahdat2000}. 
For each node, all the neighbors within 
its transmission range will receive the “Hello” message 
and then add the corresponding position and $EID$ 
information to their neighbors table. Besides, we let 
the data message contain its source node position, for 
the convenience of being referred by other relay nodes.

\figurename~\ref{replication} describes two replication 
strategies in GRONE, 
named naive replication and utility-based replication 
respectively. Each node will first check its messages 
one by one, and transmit all the deliverable ones. 
Then for the remaining messages the node will employ 
either of the two replication strategies for custody 
transmission. Assuming that the current node making 
the replication decision is $N_{X}^{node}$, $N_{X}^{node}$ will 
check its neighbors table firstly and then utilize 
the naive replication strategy when there is only 
one active neighbor. Whereas when there are more 
than one neighboring nodes exist, $N_{X}^{node}$ will make 
use of the utility based replication method thus 
avoiding the message flooding in the network. 
\begin{figure}
\centering
\includegraphics[width=0.7\linewidth]{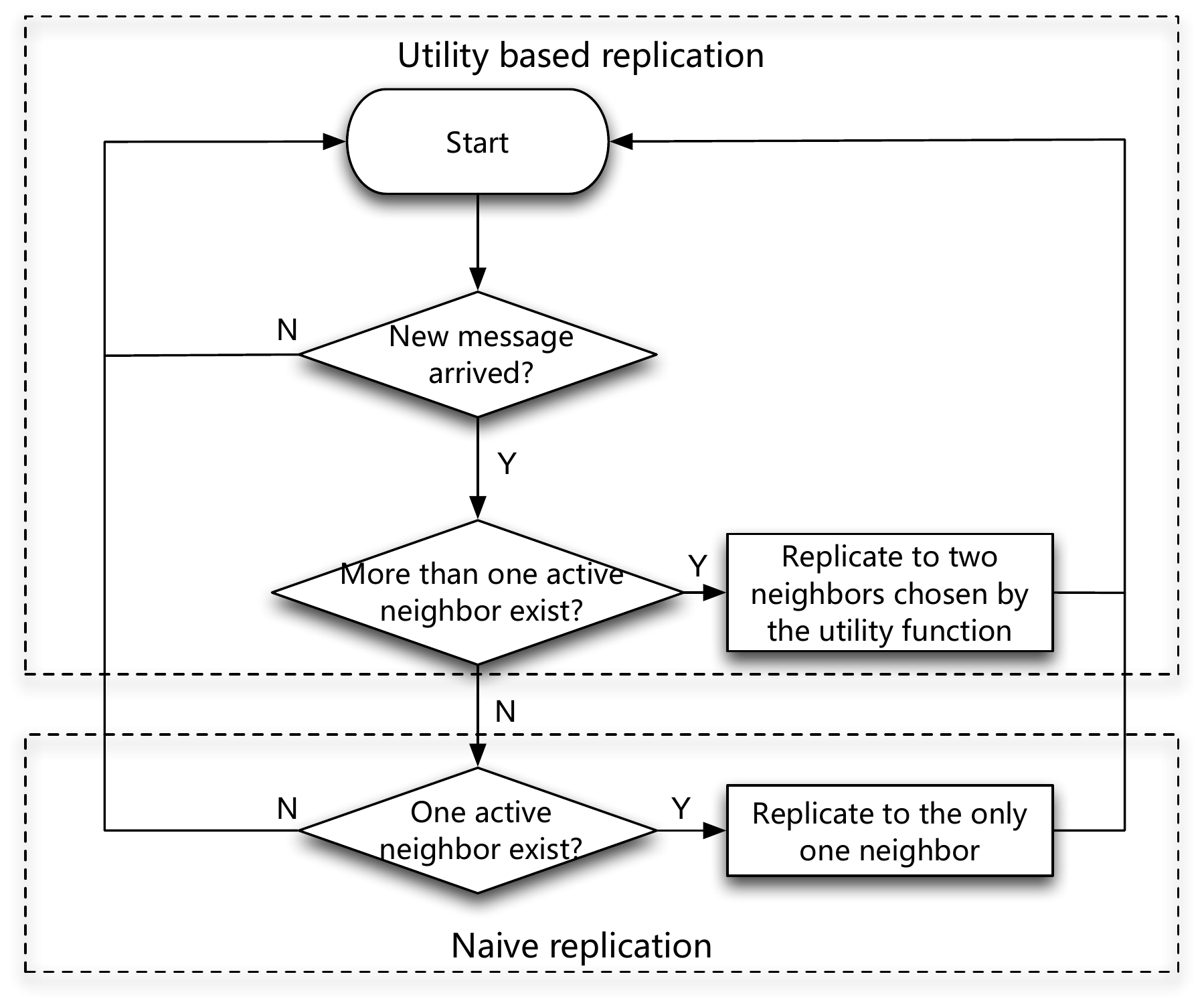}
\caption{Naive replication and utility based replication.}
\label{replication}
\end{figure}

\figurename~\ref{hello} shows the table updating and 
the message redundancy coping mechanism, both of which 
are based on exchanges of the ``Hello'' message among nodes.
\begin{figure}
\centering
\includegraphics[width=0.7\linewidth]{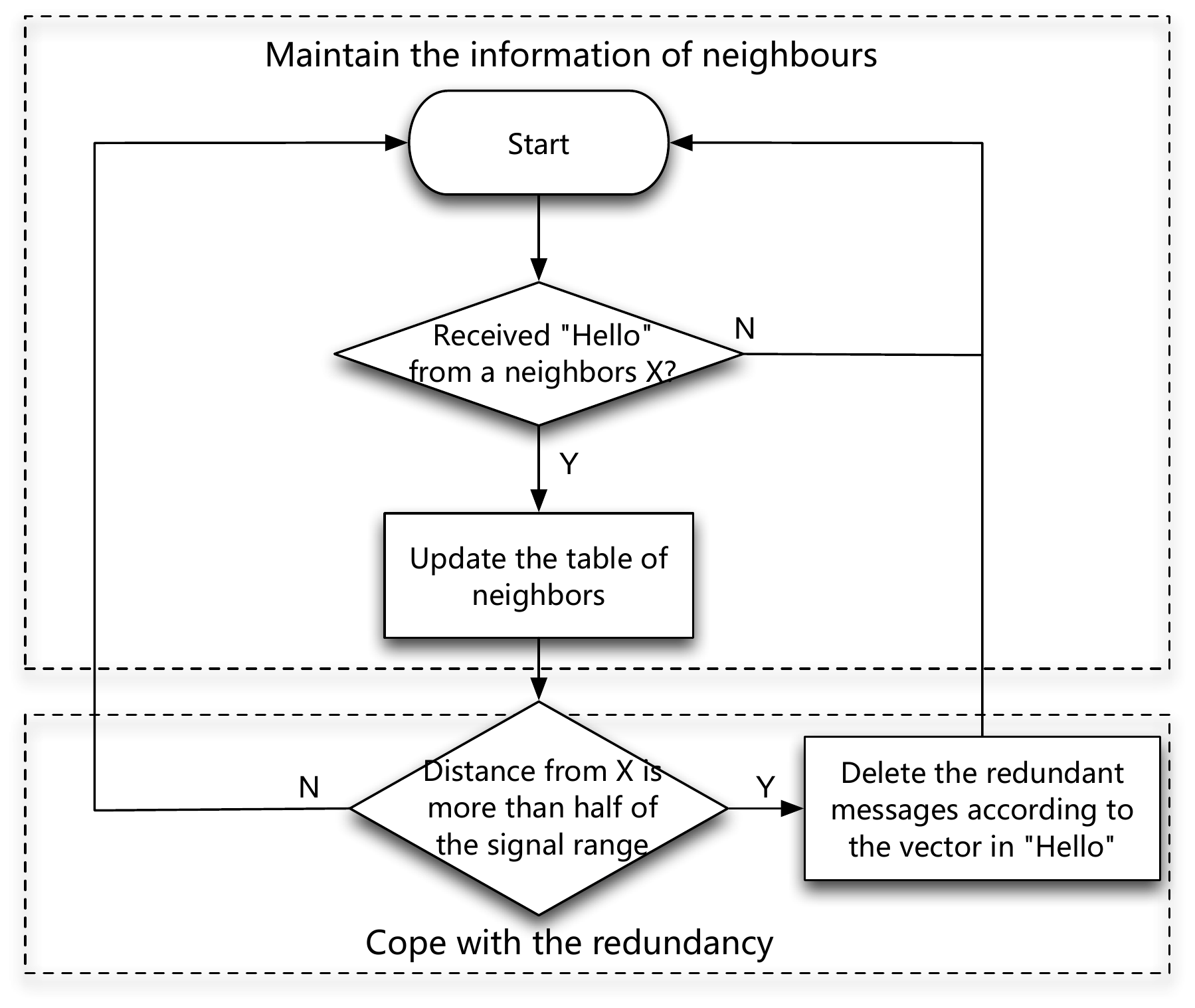}
\caption{Cope with redundancy by "Hello" messages when two nodes are close.}
\label{hello}
\end{figure} 
Since the energy is limited in delay tolerant MANETs 
and there is a tradeoff between the listening frequency 
and energy consumption of sending probing ``Hello'' 
messages\cite{Wang2009}, we should take full advantage 
of each ``Hello'' message, and thus we bind the message 
redundancy coping mechanism with the ``Hello'' message. 
Here comes an interesting question. 
In the discussion mentioned above, we know that the 
redundancy coping mechanism will delete the message 
from either of the two nodes. 
So, should the message still be transmitted to a node 
whose distance from $N_{X}^{node}$ is less then $R/2$? 
In GRONE, the answer is yes. 
One reason is that we intend to split the redundancy 
coping mechanism from the replication task for the 
purpose of making GRONE simple and clear. 
Furthermore, a ``now bad node'' might be a ``prospective 
future good relay'', since that node might move to a 
farther place where no node can cover now. 
Thus if we did not replicate the message to the node 
under this circumstance, we would in some sense miss 
a chance to deliver the message further. 
Moreover, even the “now bad node” is also ``a future bad node'', 
it will still be handled by the redundancy coping mechanism, 
since every two arrived ``Hello'' message will trigger it, 
as shown in \figurename~\ref{hello}. 
In this case, the node will check the summary vector contained 
in the ``Hello'' message with its own, and delete the redundant 
messages so as to save the buffer resources.

The ``Hello'' message can aid a node in notifying its neighbor(s)’ 
attendance, and thus we let each node maintain its neighbors table 
by exchanging ``Hello'' messages. 
However a neighbor can be considered available only when it is active, 
which means that we should also have a notification scheme for nodes 
to delete the invalid neighbors. 
To handle this, each node deletes the corresponding neighbor entry 
when its expected ``Hello'' message has been missed more than two times. 
The behind reason is that one time ``Hello'' missing may caused by 
some reason such as MAC conflicting or unforeseen delay etc. 
Though we can revive the deleted neighbor entry to the table when a new 
``Hello'' message arrives, operations on the table will take some time 
and bring energy consumption. 
Consequently a fault-tolerant mechanism for ``Hello'' message mentioned 
above is necessary for relieving this problem.

\section{Simulation}
\label{simu}

\subsection{Simulation Environment and Settings}
The results are evaluated by the ONE \cite{Keranen2009} simulator, 
which is widely used for the evaluation of DTN protocols 
and is actively developed at the Department of 
Communications and Networking at the Aalto University. 
ONE is a time-stepped simulator, 
which means that the accuracy of a simulation result is 
based on a configuration specific clock-step. 
A lower clock step usually leads to a faster simulation process. 
In most cases of our simulations, the clock-step is set 
to $0.1$ seconds, while in some extreme compute-intensive 
cases we choose to slightly increase the clock-step to obtain 
an acceptable simulation time. And to compensate the 
inaccuracy introduced by lowering the simulating 
particle size, each simulation is executed five times. 
By averaging, we eventually give the final results.

\begin{table}
\centering
\caption{Simulation settings}
\label{simu_set}
\begin{tabular}{
p{0.45\linewidth}<{\centering}||
p{0.25\linewidth}<{\centering}||
p{0.15\linewidth}<{\centering}}
\hline
\textbf{parameter name} & \textbf{default} & \textbf{range} \\
\hline \hline
number of nodes & 120 & - \\
world size($m^2$) & 1000 & - \\
tickets in Binary S\& W & 18 & - \\
message TTL(min) & 20 & - \\
simulation time(hours) & 5 & - \\
message size(KB) & 500 & - \\
node buffer size(MB) & 6 & 2--10 \\
transmission range(m) & 100 & 20--180 \\
node moving speed(m/s) & 0.5 & 0.2--0.8 \\
movement model & Random Walk & - \\
message interval(s) & 40 & 20--60 \\
transmission speed(KBps) & 250 & - \\
\hline
\end{tabular}
\end{table}

Again, we emphasize that, in our network situation, the 
global or more than one-hop information is difficult to obtain. 
Thus we choose to compare GRONE with three other routing 
protocols, namely Epidemic, Binary Spray \& Wait and FirstContact 
(FC), for the reason that all these routing protocols are 
independent of any global knowledge assumption or any 
information distribution mechanism. 
These three algorithms can be classified into 
three different categories. Epidemic makes use of network 
resources as much as possible to achieve the 
highest delivery ratio, while its mad flooding introduces 
high level of overhead ratio into the network. 
When the network resource is sufficient, Epidemic usually 
does the trick. 
Binary Spray \& Wait \cite{Spyropoulos2005} can be viewed as a flooding routing 
limited in hop count and number of message copies. 
Finally, FirstContact \cite{Jain2004} is the simplest single copy routing 
protocol, which lets each node always forward messages to the first encountered neighbor.
By comparing with all these routing protocols, we analyze 
the feature and performance of GRONE. 
The settings of the simulation scenario are listed in 
\tablename~\ref{simu_set}. 
\figurename~\ref{ONE} shows one of the simulation scenario snapshots. 
\begin{figure}[tbp]
\centering
\includegraphics[width=0.95\linewidth]{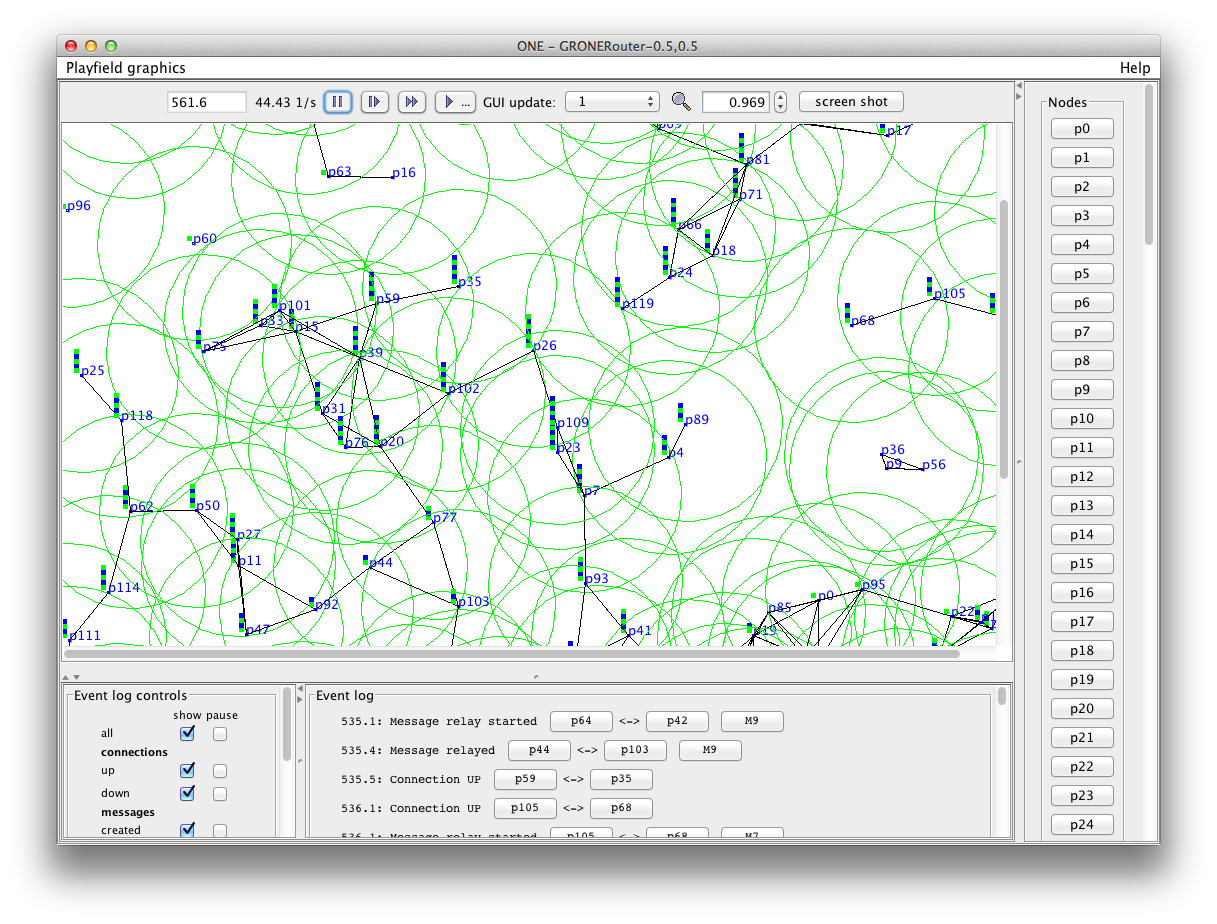}
\caption{The simulation scene in ONE simulator.}
\label{ONE}
\end{figure}
We have the nodes deployed randomly for each simulation 
and make all the nodes move according to the Random Walk 
movement model. 
Particularly, we set the node moving speed to be a 
relatively 
low value, for the purpose of investigating GRONE's 
performance with less assistance supported by node mobility. 
We find that GRONE outperforms the other three routing 
algorithms when the moving speed of nodes is relatively low, 
which indicates that a geographic routing strategy might be 
a good solution for the network scenario where node position 
is comparatively stable.

The evaluative criteria employed in our simulation are defined as following:

\[
\left\{
\begin{array}{ll}
message~delivery~ratio & = \frac{M^{delivered}}{M^{created}} \\
average~hop~count & = \frac{\sum total\ hops\ of\ every\ bundle}{M^{created}} \\
overhead~ratio & =\frac{M^{relayed}-M^{delivered}}{M^{delivered}}
\end{array}\right.
\]


We implement our evaluation by varying message generated 
interval, node buffer size, node moving speed and node 
transmission range. 
We firstly evaluate the performance 
of four algorithms under the three criteria mentioned above. 
Secondly we evaluate the stability of GRONE by changing 
the simulation time and the number of all nodes. 
Thirdly we investigate the factors that might affect 
the performance of GRONE by adjusting the node moving 
speed and node transmission range. 
All the simulation results are analyzed in subsection \ref{simu_analysis}.

\subsection{Simulation Result and Analysis}
\label{simu_analysis}
\figurename~\ref{delivery} shows the delivery ratio of the four routing protocols. 
\begin{figure*}[!tbp]
\centering
\subfigure[varying interval\label{delivery_vs_interval}]
{\includegraphics[width=0.48\textwidth]{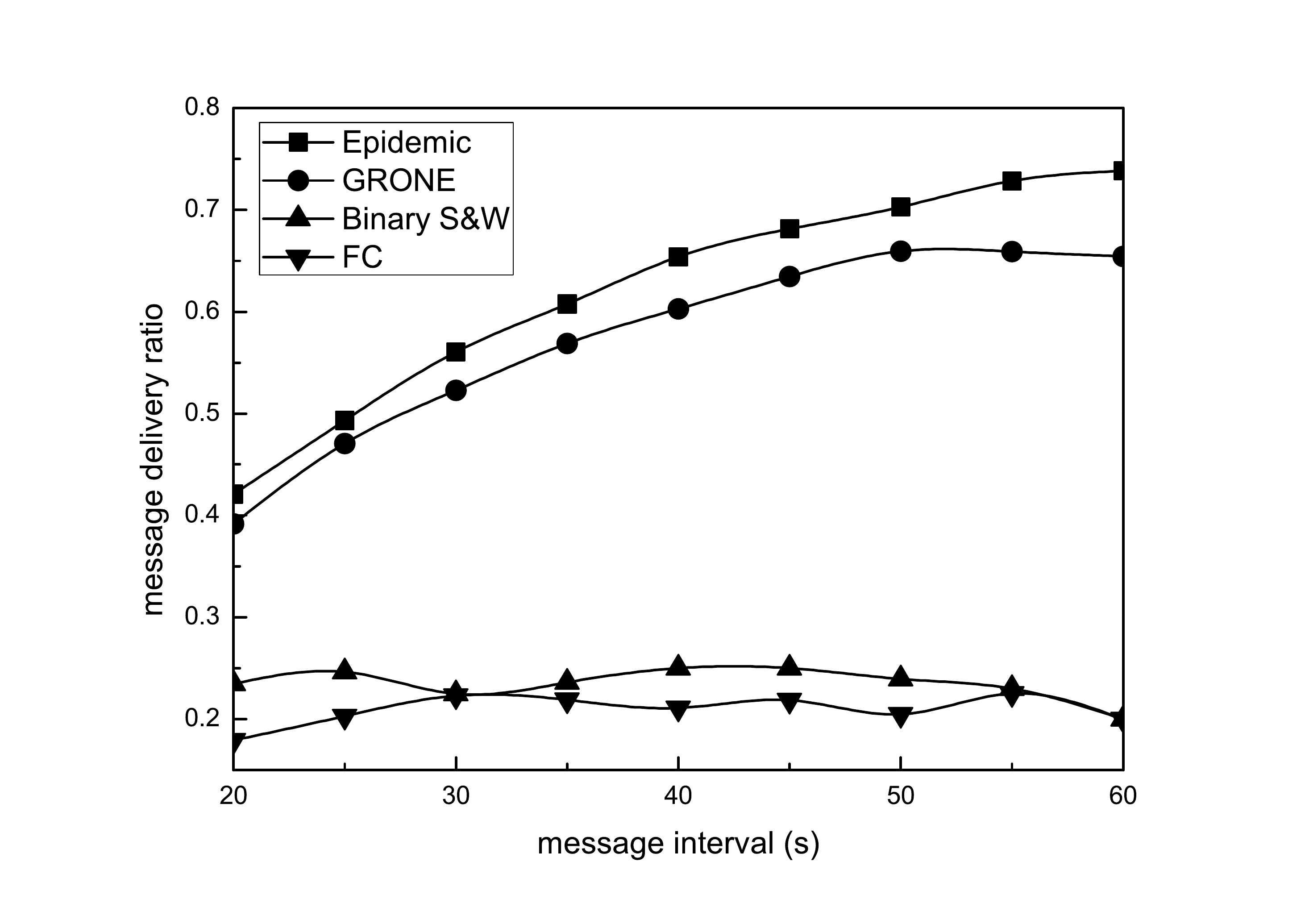}}
\subfigure[varying buffer\label{delivery_vs_buffer}]
{\includegraphics[width=0.48\textwidth]{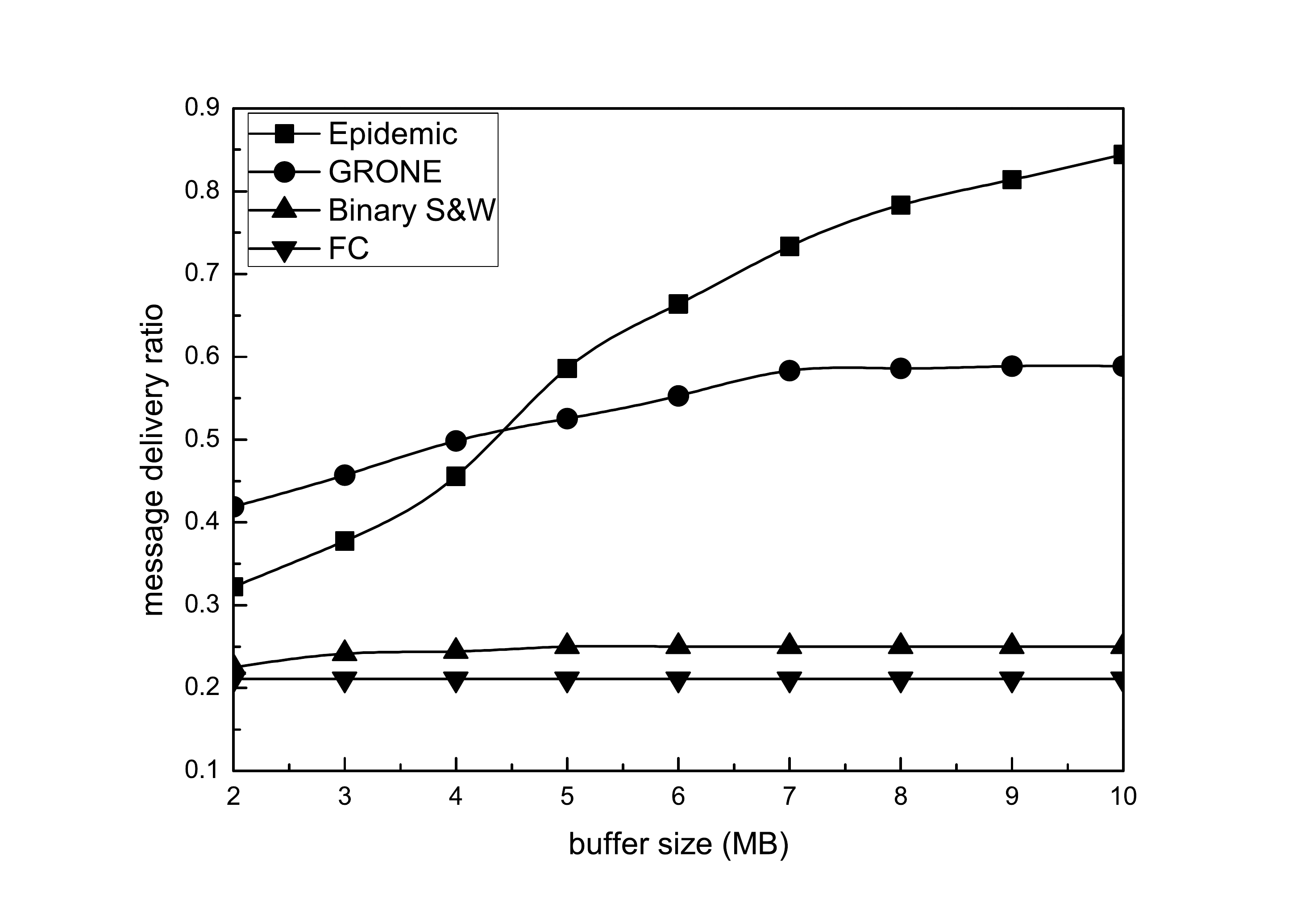}}
\\
\subfigure[varying speed\label{delivery_vs_speed}]
{\includegraphics[width=0.48\textwidth]{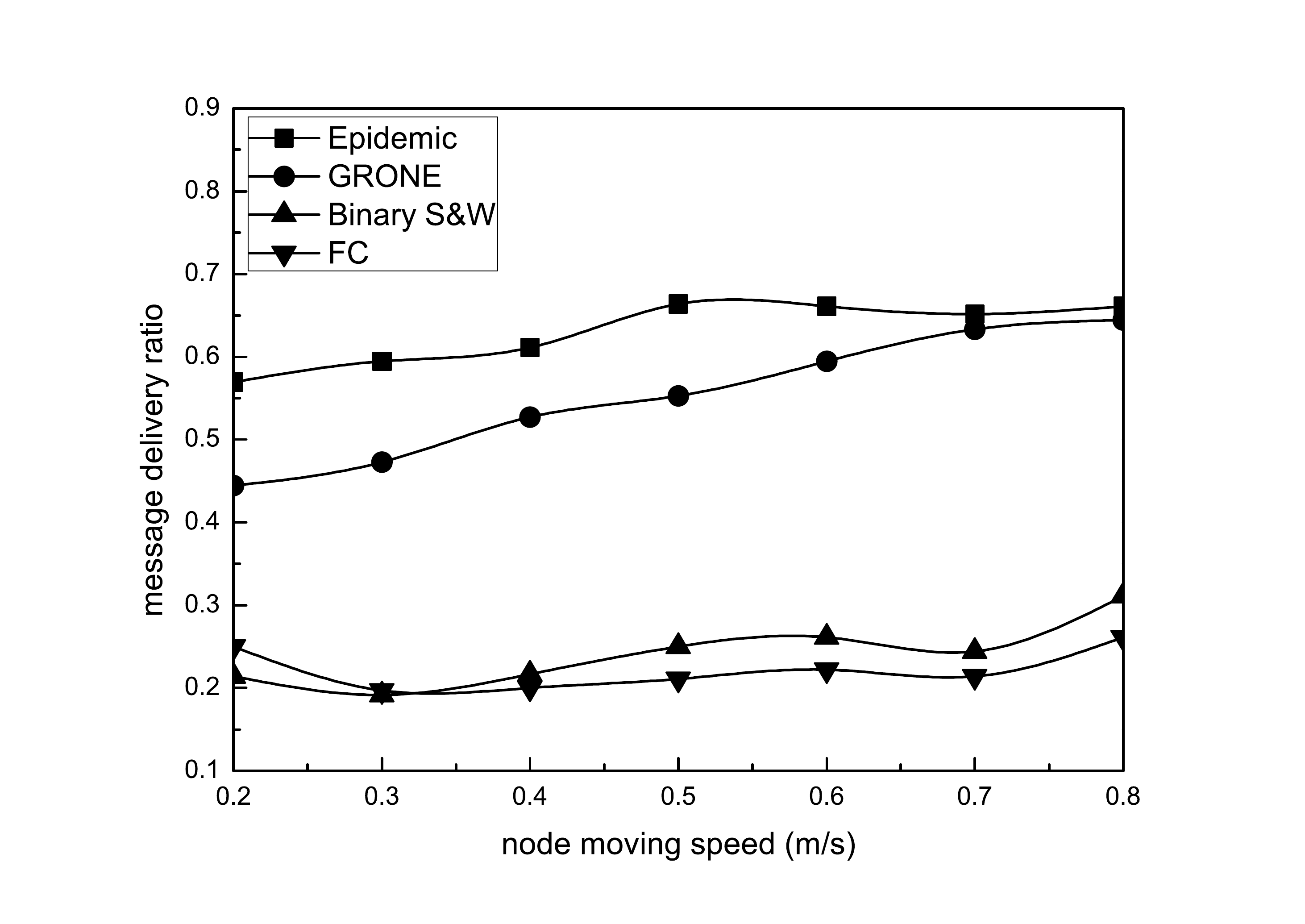}}
\subfigure[varying range\label{delivery_vs_range}]
{\includegraphics[width=0.48\textwidth]{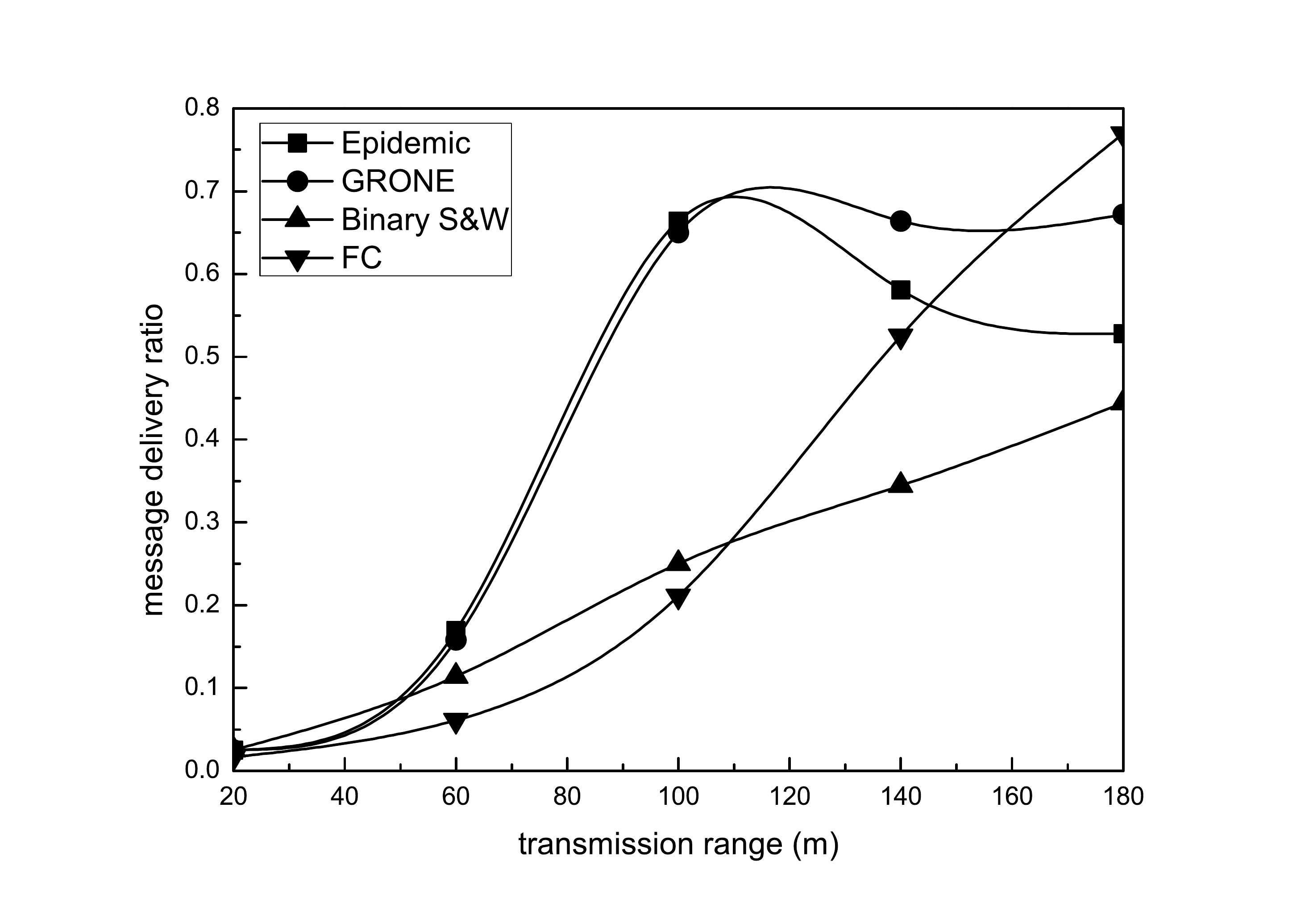}}
\caption{Message delivery ratio vs message interval, buffer size, node moving
speed and transmission range.}
\label{delivery}
\end{figure*}
In \figurename~\ref{delivery_vs_interval}, all the four routing 
protocols have a relatively low delivery ratio when the 
message interval is short, since 
that there are a great number of generated messages which
fully fill the node(s)' buffer, thus leading a high message
dropping probability. 
With the increment of message interval, the message 
delivery ratio of both Epidemic and GRONE arises. 
GRONE almost has the same high delivery ratio as Epidemic, 
while Binary Spray \& Wait and FC show worse performance 
than the first two protocols. 
Observing the two curves of 
Epidemic and GRONE,
we find that GRONE performs almost as well as Epidemic by 
resorting its intrinsic utility based replication strategy. 
And the results reflect that the redundancy coping mechanism only degrades a little for message delivery probability. 

As shown in \figurename~\ref{delivery_vs_buffer}, Epidemic has the highest delivery ratio when the
buffer size is sufficient, as expected. 
When the buffer size is about 4M,
the message delivery ratio of GRONE is appropriately equal 
to Epidemic, and both of them perform better with the available buffer 
resource increasing. 
However, when the buffer resource is strictly constrained, the 
delivery ratio of GRONE is higher than that of Epidemic, 
for the reason that it employs a utility based replication strategy instead 
of a naive one. Furthermore, GRONE has a redundancy coping mechanism, 
thus saving the limited buffer resource.
Binary Spray \& Wait and FC still have a non-ideal performance. 

\figurename~\ref{delivery_vs_speed} shows that the delivery ratio of all the four routing algorithms slightly arise with the increase of node moving speed. 
When node moving speed is relatively low, Binary Spray \&  Wait and FC do not have an ideal performance. Especially, FC employs the single-copy strategy and thus forwards message in a ``non-concurrent way'', so it has the worst performance. Beyond that, we know from the simulation 
results that multi-copy strategy can be regarded as an efficient method for increasing the delivery probability. 

In \figurename~\ref{delivery_vs_range}, when the transmission range is short, all the four algorithms have an approximately same low delivery ratio, due to that there are too few available contacts to be used in the network. With the increase of the transmission range, the connectivity of the whole network continues enhancing, and the delivery ratio of both Epidemic and GRONE arises much more quickly than Binary Spray \& Wait and FC. The reason is that both of
them generate sufficient enough message copies and make full
use of the network resource. Binary Spray \& Wait achieves an acceptable performance only when the connectivity of the network is quite strong. However, in that case the single copy routing FC performs best in terms of least resource
consumption. From the analysis mentioned above, it is found that multi-copy strategy is quite helpful for enhancing the message delivery probability when the node moving speed is low. 
Besides, GRONE roughly has the same high delivery ratio as Epidemic. Furthermore, GRONE is a better choice than Epidemic when the buffer resource is constrained.

In \figurename~\ref{hop}, we evaluate the average hop count 
performance of the four algorithms. A lower average hop count usually means a more efficient relay operation. 
\begin{figure*}
\centering
\subfigure[varying interval\label{hop_vs_interval}]
{\includegraphics[width=0.48\textwidth]{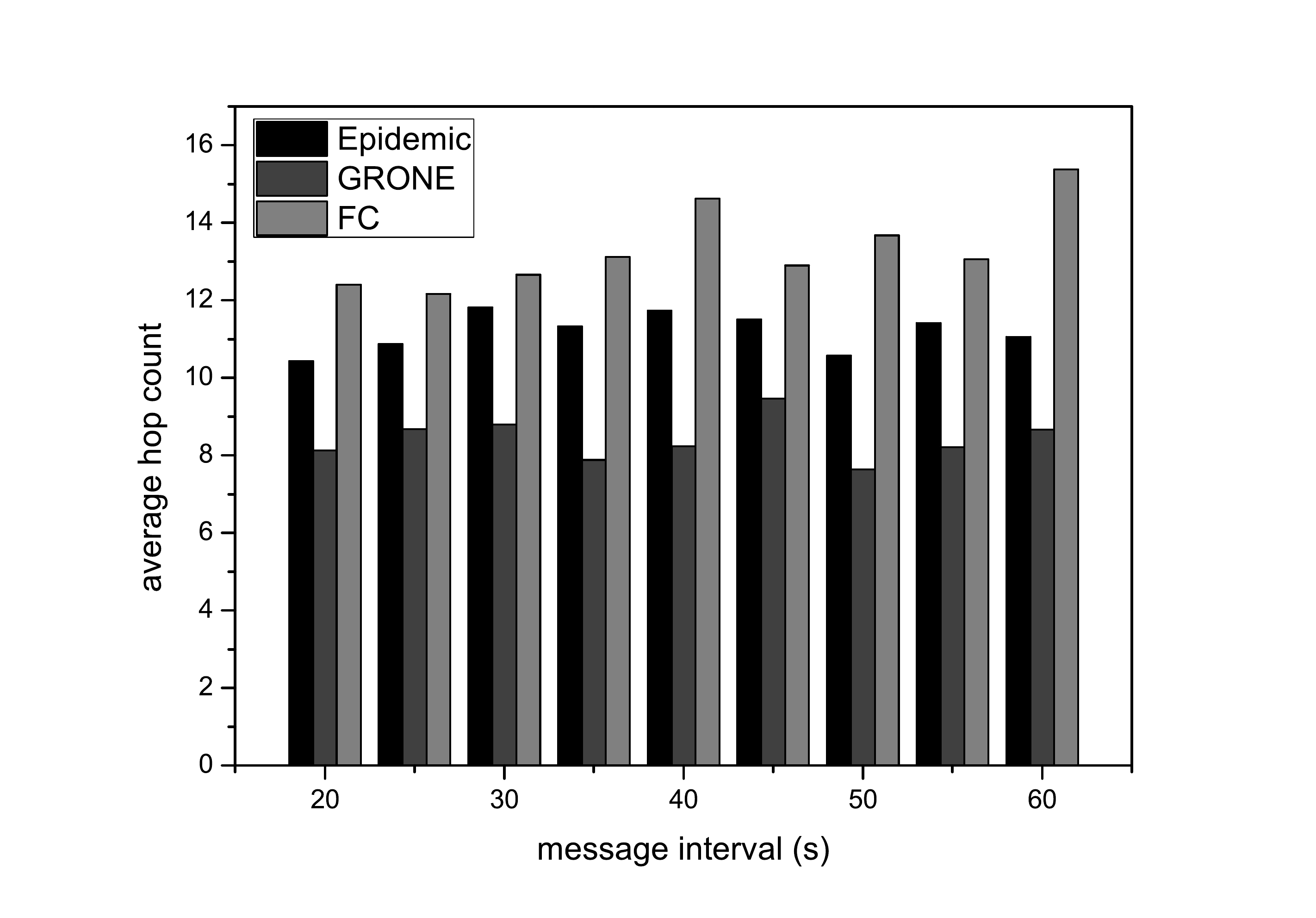}}
\subfigure[varying buffer\label{hop_vs_buffer}]
{\includegraphics[width=0.48\textwidth]{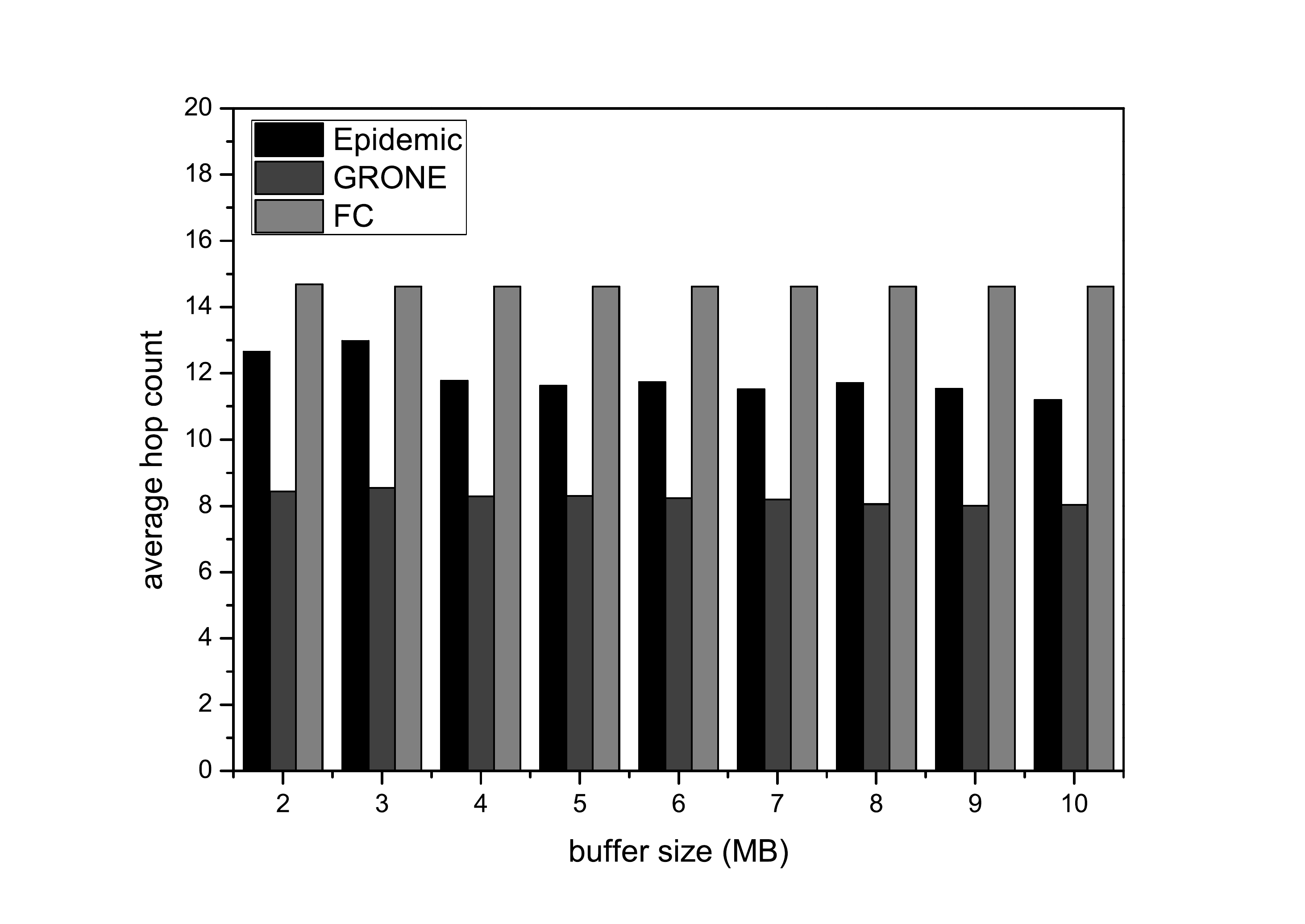}}\\
\subfigure[varying speed\label{hop_vs_speed}]
{\includegraphics[width=0.48\textwidth]{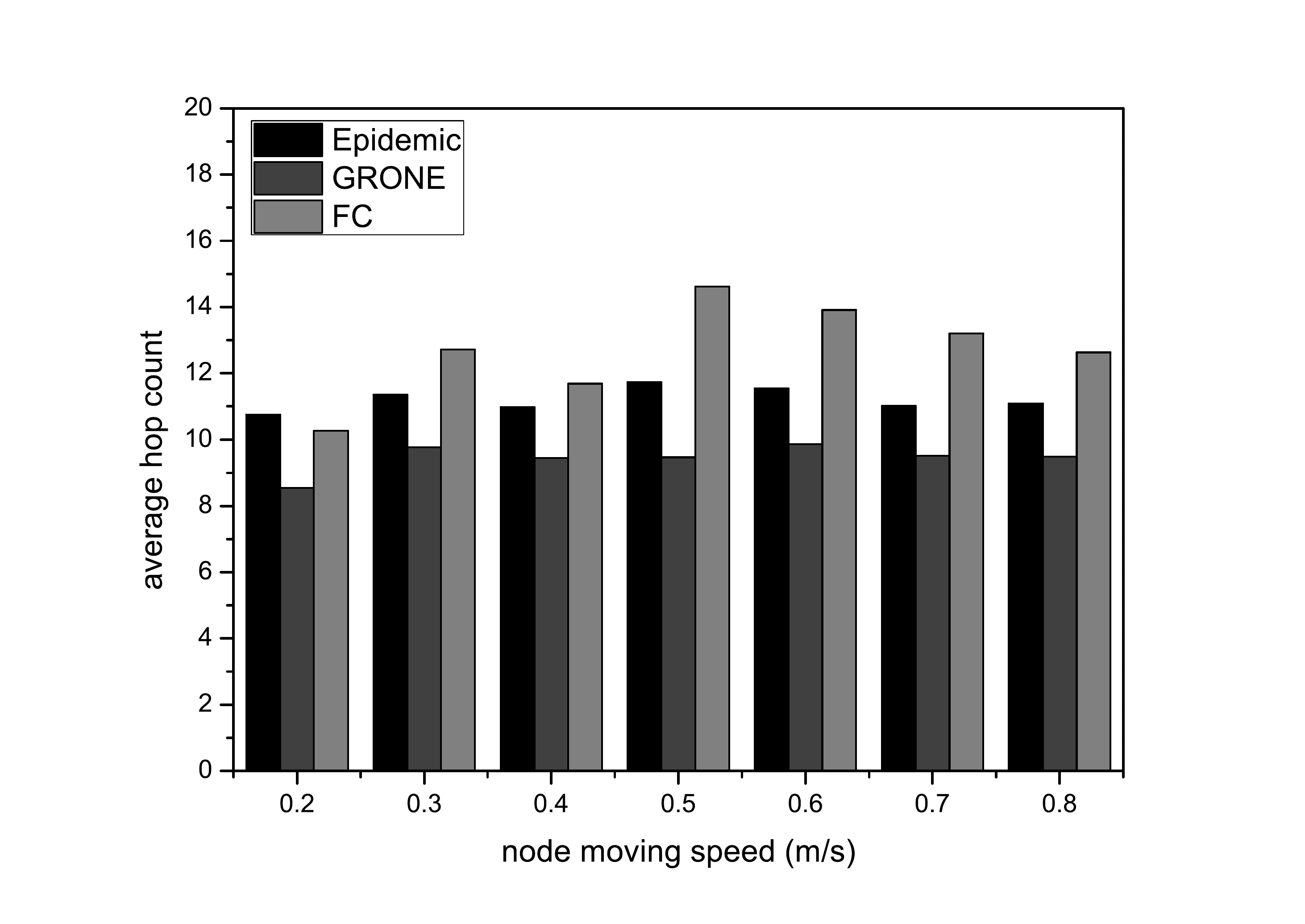}}
\subfigure[varying range\label{hop_vs_range}]
{\includegraphics[width=0.48\textwidth]{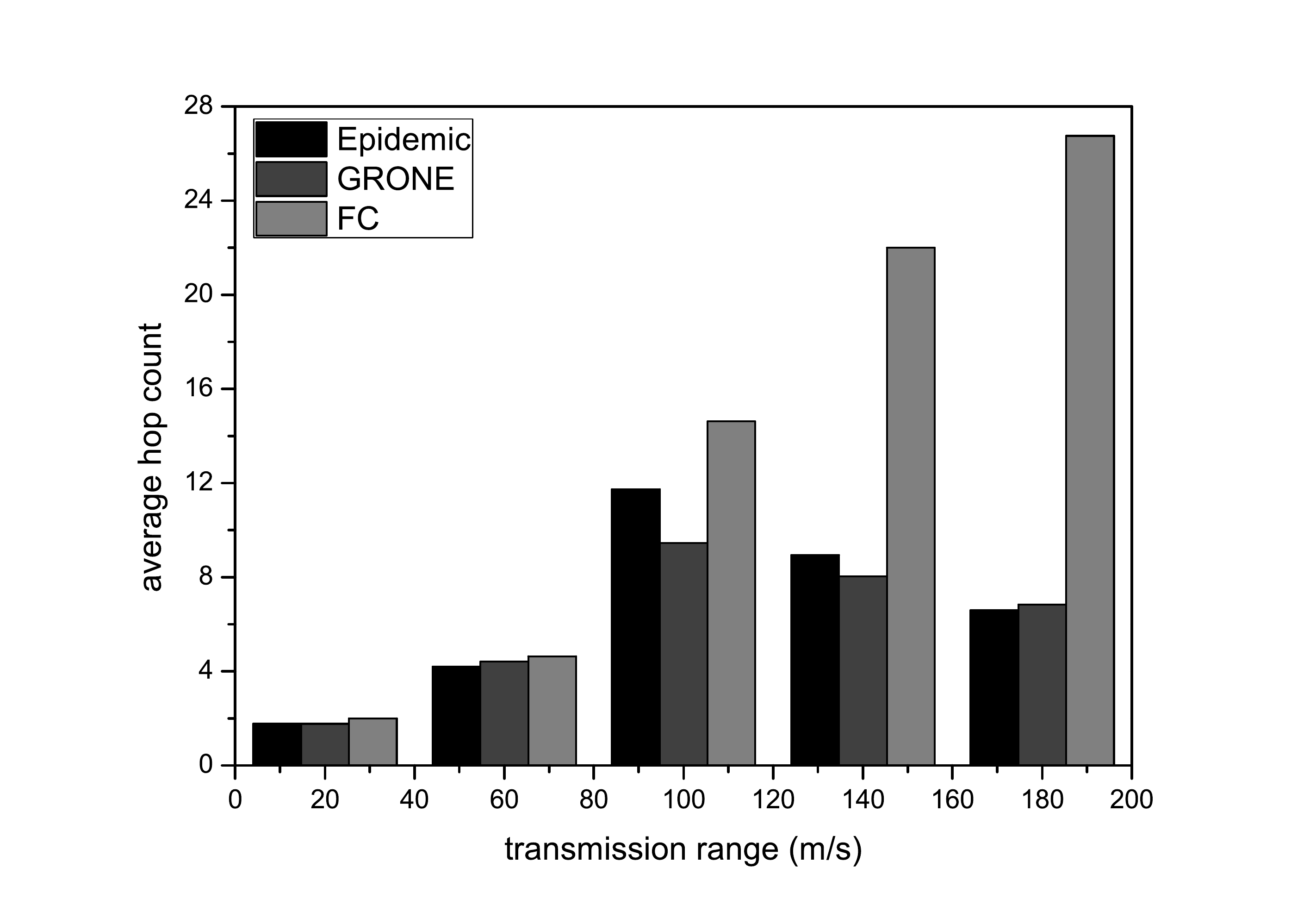}}
\caption{Average hop count vs message interval, buffer size,node moving speed
and transmission range.}
\label{hop}
\end{figure*}
However, things become very different when investigating Binary Spray \& Wait, for that the maximum possible hop count per message is limited by specific simulation configuration. Since we set the tickets to be 18 in our simulation for Binary Spray \& Wait protocol, the maximum hop for each message is less than $\lceil \log_{2}18\rceil=5$. As a result we only compare the other three algorithms in average hop count evaluation. 

Observed from the results shown in \figurename~\ref{hop_vs_interval}, \ref{hop_vs_buffer} and \ref{hop_vs_speed}, GRONE has the least average hop count 
among the three protocols, for the reason that it makes 
rational next hop choices based on node positions, thus 
making the message move purposefully toward the destination. 
Meanwhile, If the network 
resource is quite sufficient, Epidemic should perform at 
least as good as GRONE does. But in our simulation, the 
buffer resource is always set to be in a relatively constrained 
range like most actual network deployments, 
so that the incurred frequent message drop would introduce some 
extra hop count into the routing process. 
Furthermore, we can find that FC always have the highest average hop due to its intrinsic single copy and naive next-hop choice.

As illustrated in \figurename~\ref{hop_vs_interval}, we find 
that the message interval does not have great effect on the 
average hop count for all the three routing algorithms. The 
same situation shows up in \figurename~\ref{hop_vs_buffer}, 
and we find that FC maintains an  approximately constant average hop count in the case of sufficient buffer resource supply.

\figurename~\ref{hop_vs_speed} shows that the average hop 
count of FC is
more sensitive to node moving speed than the other two 
protocols. We can roughly conclude from the result that the 
average hop count for FC usually becomes larger along with the 
increase of node moving speed, because it always forwards 
the message blindly 
to the first met neighbor node that might not always 
be a suitable relay for the destination. 

In \figurename~
\ref{hop_vs_range}, when the transmission radius is comparatively short, the average hop count is low. 
However, in this case the delivery ratio is 
extremely low, as shown in \figurename~\ref{delivery_vs_range}, and thus we conclude that the overall performance of all protocols is terrible.
In this case, all the counted 
messages are usually delivered within small hop from source node, 
since that the network connectivity is too weak to get the message transferred further. For a message which needs at least a few hops to be forwarded before arriving at the destination, it has a high probability of being 
dropped before arriving at the destination due to the 
limited node buffer resource. Consequently, when the transmission
range is small, all these three algorithms have a relatively
low average hop count.
As the transmission range increases, the average hop 
count of both Epidemic and GRONE arise, since that 
there are more and more opportunities for each message to be transferred, thus increasing the number of counted messages.
However, when the transmission range becomes quite large, quite a number of messages could be delivered within only a
few hops from the source to destination, and thus the average hop count for Epidemic and GRONE decrease.
As for FirstContact protocol, it has a totally different behavior that the 
average hop count keeps increasing with network connectivity 
enhancing. The reason is the same as that mentioned before, i.e. the blind forwarding strategy might incur many unnecessary relay operations.

\figurename~\ref{overhead}, illustrates the overhead ratio performance for all the four routing protocols.
\begin{figure*}
\centering
\subfigure[varying interval\label{overhead_vs_interval}]
{\includegraphics[width=0.48\textwidth]{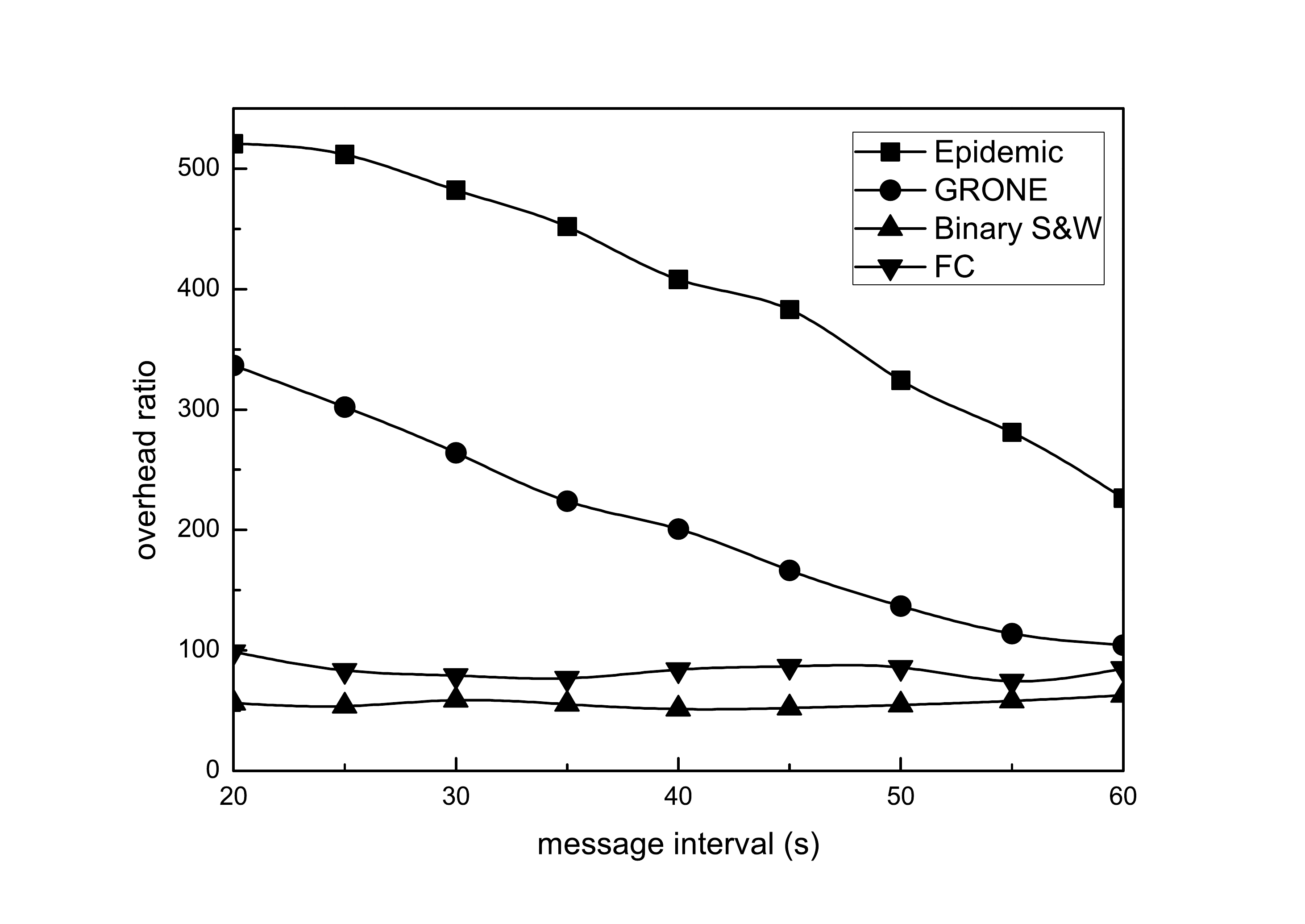}}
\subfigure[varying buffer\label{overhead_vs_buffer}]
{\includegraphics[width=0.48\textwidth]{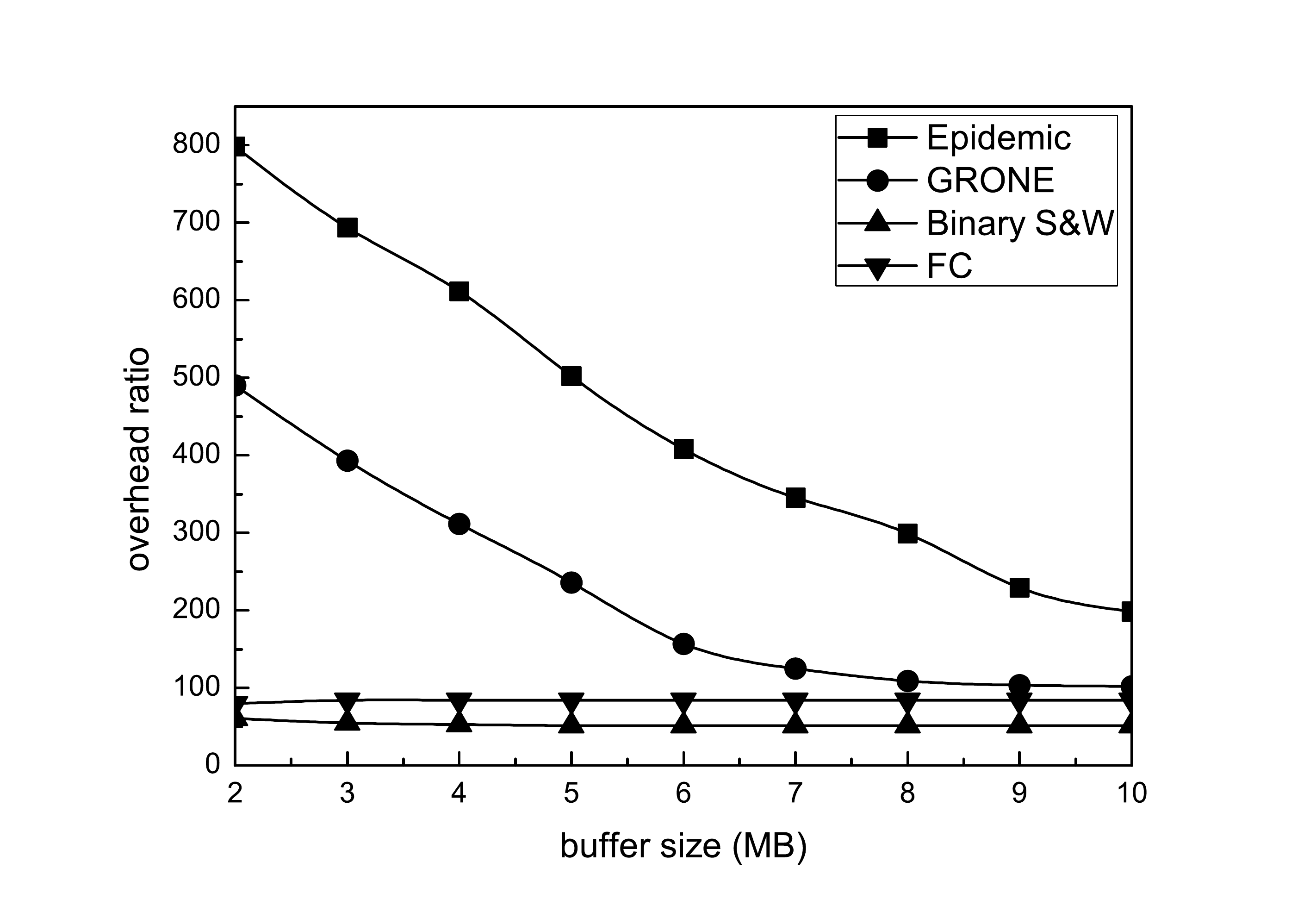}}\\
\subfigure[varying speed\label{overhead_vs_speed}]
{\includegraphics[width=0.48\textwidth]{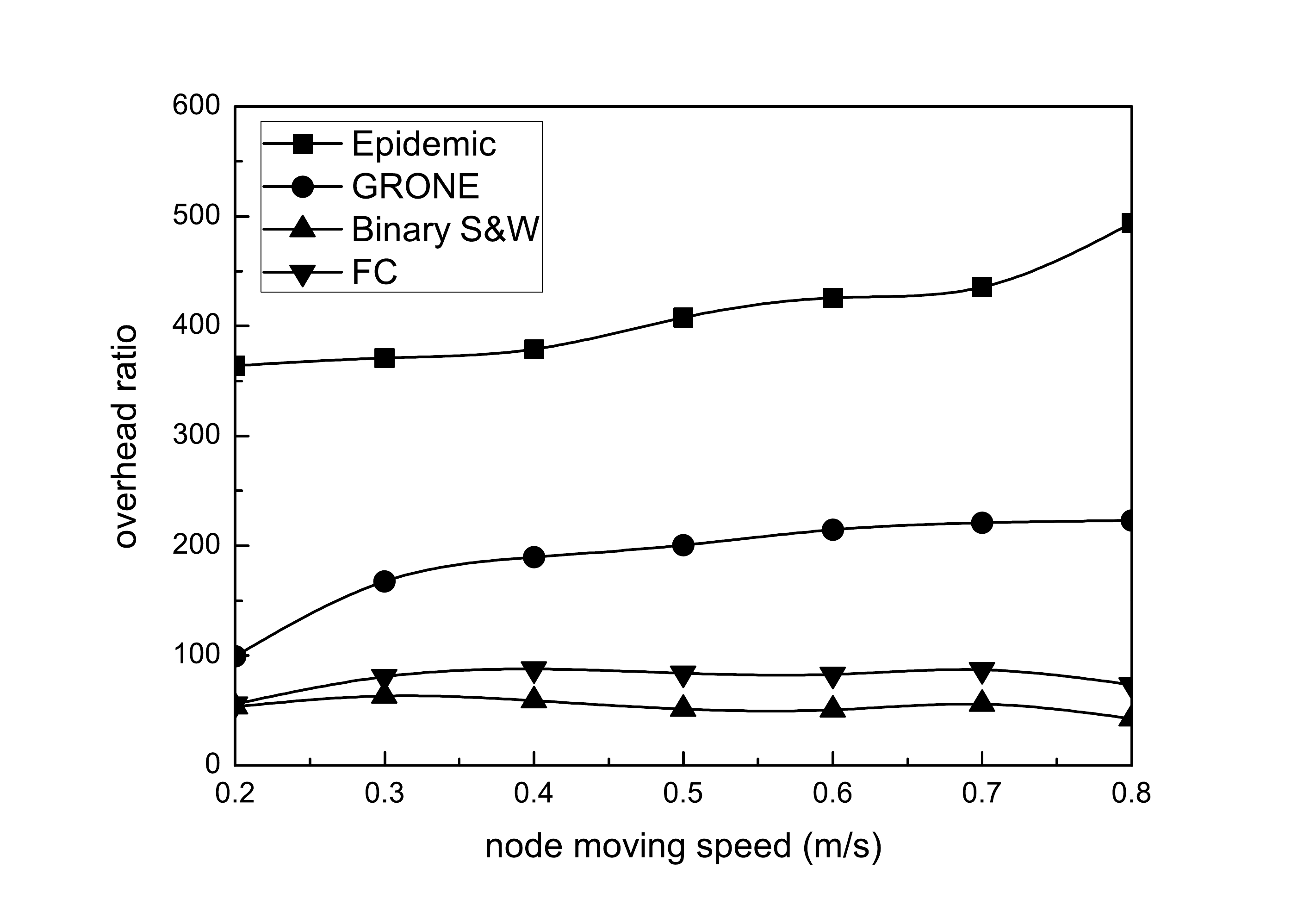}}
\subfigure[varying range\label{overhead_vs_range}]
{\includegraphics[width=0.48\textwidth]{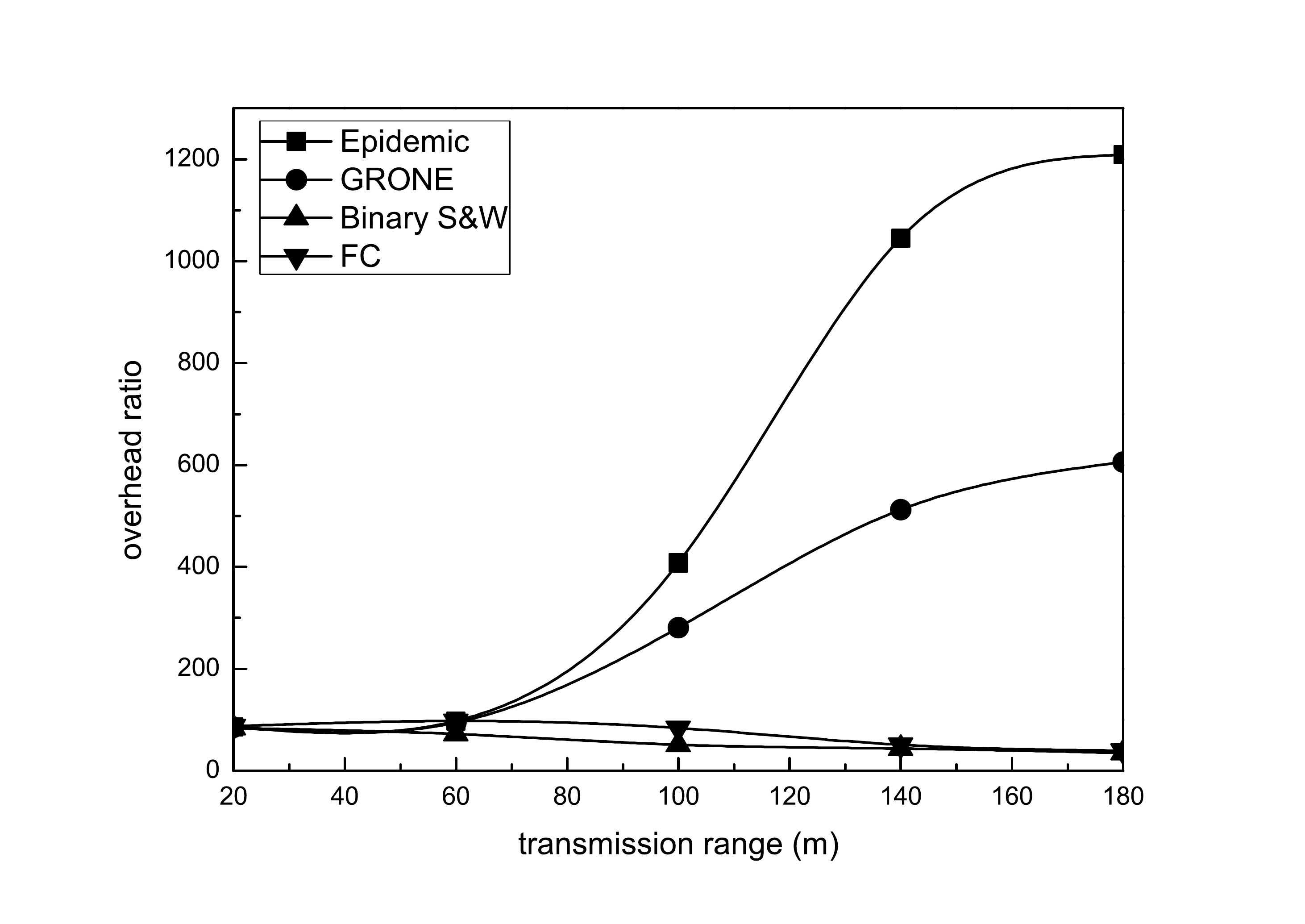}}
\caption{Overhead ratio vs message interval, buffer size, node moving speed
and transmission range.}
\label{overhead}
\end{figure*}
As defined in subsection 6.1, the overhead ratio is a criterion that reflects the efficiency of message transmission. 
We know that the Direct Delivery (DD) \cite{Grossglauser2002} has a zero overhead ratio, since $M^{relayed}$ is always strictly equal to $M^{delivered}$.  For Binary Spray \& Wait, the same forwarding strategy as DD is employed in the ``wait'' stage, and meanwhile, the ``spray'' stage limits the maximum number of copies for each message, and intuitively we can judge that Spray \& Wait should have a relatively low overhead ratio. FC is a single copy routing protocol and uses record vector to avoid routing loops, so $M^{relayed}\leq|maximum\ hop\ count|\cdot M^{created}$. Thus we can roughly estimate that the overhead ratio of FC from this inequality, and the result indicates that its overhead ratio is as low as that of Binary Spray \& Wait. 
Our analysis is verified by the simulation results shown in \figurename~\ref{hop_vs_interval}, \ref{hop_vs_buffer} and \ref{hop_vs_speed}, that Epidemic and GRONE have higher overhead ratio than Binary Spray \& Wait and FC. However, as shown in \figurename~\ref{delivery}, neither Binary Spray \& Wait nor FC has an acceptable delivery ratio. 

Let us focus on comparing our routing algorithm GRONE with Epidemic. 
From the previous analysis, we know that GRONE approximately has the same high delivery ratio as Epidemic. While from all the four subfigures in \figurename~\ref{overhead}, it is found that GRONE has a considerably lower overhead ratio than Epidemic, since it implements replication mechanism based on a utility function that in some sense reduces unnecessary forwarding operations. Besides, the message redundancy coping mechanism is also helpful to reduce the number of message copies and relay operations. In \figurename~\ref{overhead_vs_interval}, when message interval increases, the overhead ratio of both two protocols decrease, for the reason that the less generated messages lead to the less relay operations. \figurename~\ref{overhead_vs_buffer} shows that increasing buffer size is a good solution to cope with high overhead ratio, since the delivery ratio would increase, thus making the number of delivered messages arise. In \figurename~\ref{overhead_vs_speed}, we can conclude that when nodes have a higher moving speed, the overhead ratio would also arise. \figurename~\ref{overhead_vs_range} shows that when the connectivity of the whole network is strengthened, the overhead ratio of GRONE increases much lower than Epidemic. 

To evaluate the the stability of routing performance for GRONE, we vary the simulation time and the number of all nodes and then give the simulation results in \figurename~\ref{number_of_nodes}. 
\begin{figure}
\centering
\includegraphics[width=0.9\linewidth]{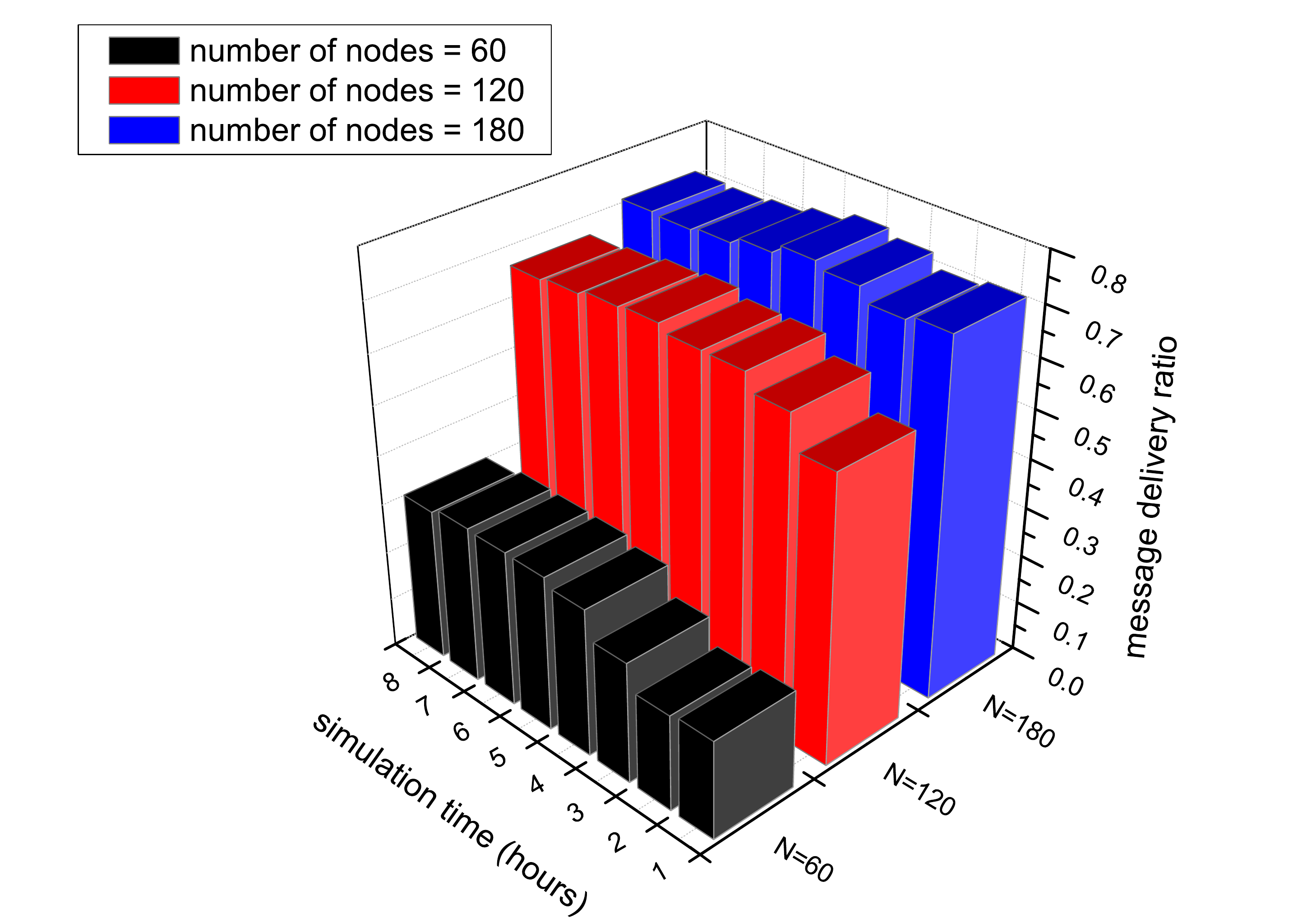}
\caption{Investigating message overhead ratio by varying number of 
nodes and simulation time}
\label{number_of_nodes}
\end{figure}
It seems that the duration of simulation time has trivial effect on the message delivery ratio of GRONE. Additionally, GRONE is more suitable to be implemented in the network where nodes are deployed not too sparsely. However, since the default node moving speed is set to be comparatively low, and if the connectivity of the network is too weak, it brings no surprise that routing protocols cannot achieve an ideal performance, and this has already been demonstrated in \figurename~\ref{delivery_vs_range}. Generally speaking, the stability of GRONE is moderate, and the performance is acceptable when node density is not too low.

Finally, we investigate the performance of GRONE by adjusting the node moving speed and node transmission range, as shown in \figurename~\ref{simulation_time}. 
\begin{figure}
\centering
\includegraphics[width=0.9\linewidth]{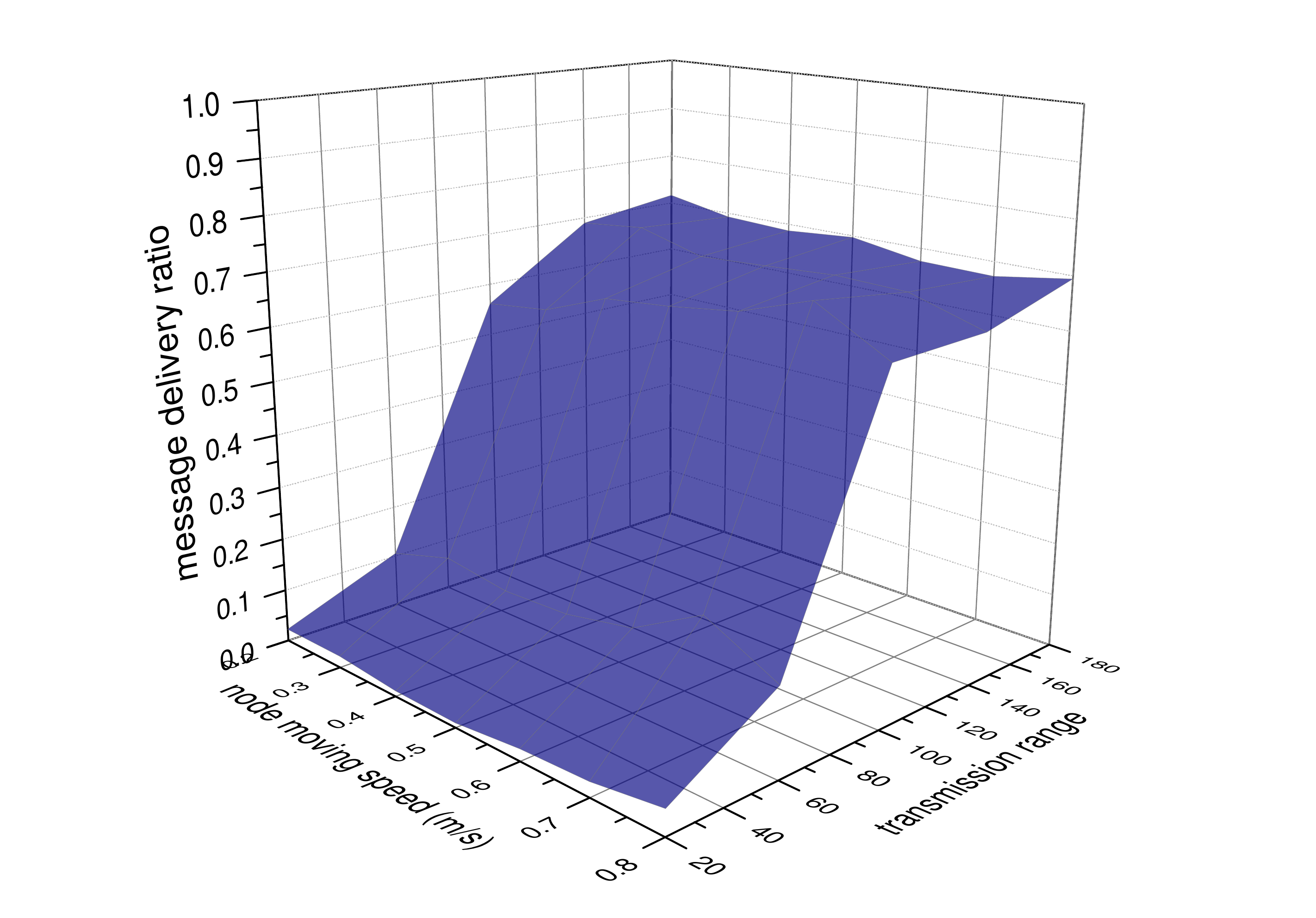}
\caption{Investigating message overhead ratio by varying node moving speed and transmission range}
\label{simulation_time}
\end{figure}
We find that the transmission range has a significant effect on the message delivery ratio of GRONE. The result is quite similar to that in \figurename~\ref{number_of_nodes}, which implies that GRONE is more applicable in a network with relatively high node density.

\section{Conclusion}
\label{con}
In this paper, we proposed a geographic routing protocol GRONE based on one-hop neighboring nodes information. By defining the utility function that relies on the position information of neighbors, GRONE achieves approximately as high delivery ratio as Epidemic while introducing significantly lower overheads to the network. Since we have no assumptions of the position of the destination node, a good way to maximize the delivery probability and lower the average hop count is to spread the message uniformly to each direction, and transmit them as far as possible. The utility function takes both of the two factors, transmission direction and relay distance, into consideration, and ensures that the messages roughly spread in a radiating manner. The simulations results indicate that GRONE has a lower average hop count than both Epidemic and FC. 

Besides, k-MDR, a criterion to evaluate the message redundancy, is defined, and by using it we prove that the 2-MDR for each message will keep increasing in GRONE. And for the purpose of coping with the message redundancy, we add the redundancy coping mechanism into GRONE. The simulation results show that the overheads ratio of GRONE is considerably lower than Epidemic. Besides, when the moving speed of nodes is relatively low, geographic routing protocol works quite well, as the position of each node keeps relatively stable.

Finally, GRONE has a simple implementation process since it only resorts one hop neighboring node information. In the future, we intend to explore a more precise evaluation criterion for relay nodes selection, which can better adapt to the network topology changes. Furthermore, we intend to add some message ferrying nodes to facilitate routing and schedule their mobility trajectory in order to achieve a higher delivery ratio.

\section{Acknowledgement}
This research was supported in part by Foundation research project of Qingdao Science and Technology Plan under Grant No.12-1-4-2-(14)-jch and Natural Science Foundation of Shandong Province under Grant No.ZR2013FQ022

\bibliographystyle{IEEEtran}


\end{document}